\newtheoremstyle{standard}
  {}
  {}
  {\rmfamily\mdseries\itshape}
  {\parindent}
  {\rmfamily\mdseries\scshape}
  {.}
  {.7em}
  {\thmnumber{#2. }\thmname{#1}\thmnote{ \textmd{(#3)}}
}
\newtheoremstyle{alt}
  {}
  {}
  {\rmfamily\mdseries\upshape}
  {\parindent}
  {\rmfamily\mdseries\scshape}
  {.}
  {.7em}
  {\thmnumber{#2. }\thmname{#1}\thmnote{ \textmd{(#3)}}
}
\newcounter{entry}[section]
\renewcommand{\theentry}{\thesection.\arabic{entry}}
\newcommand{\entry}[1][\hspace{-.5em}]{%
 \vspace{.5\baselineskip}\par%
 \refstepcounter{entry}%
 {\rmfamily\mdseries\upshape\theentry.\hspace{.5em}\itshape #1\hspace{.5em}---\hspace{.6em}}%
}
\theoremstyle{standard}
\newtheorem{lemma}[entry]{Lemma}
\newtheorem{proposition}[entry]{Proposition}
\newtheorem{corollary}[entry]{Corollary}
\newtheorem*{theorem*}{Theorem}
\newtheorem*{lemma*}{Lemma}
\newtheorem*{proposition*}{Proposition}
\newtheorem*{corollary*}{Corollary}
\theoremstyle{alt}
\newtheorem{definition}[entry]{Definition}
\newtheorem{example}[entry]{Example}
\newtheorem{remark}[entry]{Remark}
\newtheorem{problem}[entry]{Problem}
\newtheorem*{definition*}{Definition}
\newtheorem*{example*}{Example}
\newtheorem*{remark*}{Remark}
\newtheorem*{problem*}{Problem}
\numberwithin{equation}{entry}
\newcommand{\beq}{\begin{equation}}
\newcommand{\eeq}{\end{equation}}
\newcommand{\beqv}{\begin{equation*}}
\newcommand{\eeqv}{\end{equation*}}
\DeclareMathOperator{\im}{im}
\DeclareMathOperator{\Spec}{Spec}
\DeclareMathOperator{\Proj}{Proj}
\DeclareMathOperator{\Sec}{Sec}
\DeclareMathOperator{\res}{res}
\DeclareMathOperator{\ev}{ev}
\DeclareMathOperator{\id}{id}
\DeclareMathOperator{\codim}{codim}
\newcommand{\F}{\mathbb{F}}
\newcommand{\GG}{\mathbf{G}}
\newcommand{\FF}{\mathbf{F}}
\newcommand{\PP}{\mathbf{P}}
\newcommand{\fI}{\mathfrak{I}}
\newcommand{\cA}{\mathcal{A}}
\newcommand{\cB}{\mathcal{B}}
\newcommand{\cC}{\mathcal{C}}
\newcommand{\cE}{\mathcal{E}}
\newcommand{\cK}{\mathcal{K}}
\newcommand{\cL}{\mathcal{L}}
\newcommand{\cO}{\mathcal{O}}
\newcommand{\cP}{\mathcal{P}}
\newcommand{\cQ}{\mathcal{Q}}
\newcommand{\cR}{\mathcal{R}}
\newcommand{\longto}{\longrightarrow}
\newcommand{\tens}{\otimes}
\newcommand{\moins}{\setminus}
\newcommand{\deux}[1][2]{^{\langle #1\rangle}}
\newcommand{\abs}[1]{\lvert #1\rvert}
\newcommand{\linspan}[1]{\langle #1\rangle}
\newcommand{\inj}{\hookrightarrow}
\newcommand{\surj}{\twoheadrightarrow}
\let\phi\varphi
\let\epsilon\varepsilon
\let\subset\subseteq
\let\supset\supseteq
\let\subsetneq\varsubsetneq
\title[Quadratic hull and multiplication algorithms]{The quadratic hull of a code and\\ the geometric view on multiplication algorithms}
\author{Hugues Randriambololona}
\date{}
\begin{document}

\setlist{leftmargin=2\parindent}

\maketitle

\begin{abstract}
We introduce the notion of quadratic hull of a linear code, and give some of its properties.
We then show that any symmetric bilinear multiplication algorithm for a finite-dimensional algebra over a field
can be obtained by evaluation-interpolation at simple points (i.e. of degree and multiplicity $1$) on a naturally
associated space, namely the quadratic hull of the corresponding code. This also provides a geometric answer to some questions
such as: which linear maps actually are multiplication algorithms, or which codes come from supercodes (as asked in~\cite{STV92}).
We illustrate this with examples, in particular we describe the quadratic hull of all the optimal algorithms computed
in~\cite{BDEZ12} for small algebras.

In our presentation we actually work with multiplication reductions. This is a generalization of multiplication algorithms,
that allows for instance evalu\-ation-interpolation at points of higher degree and/or with multiplicities, and also includes
the recently introduced notion of ``reverse multiplication-friendly embedding'' from~\cite{CCXY18}. All our results hold
in this more general context.
\end{abstract}


\section{Introduction}

\entry
Early remarks of Goppa \cite{Goppa83} and Lachaud \cite[\S5.10]{Lachaud86},
made more precise by Pellikaan, Shen, and van Wee in~\cite{PSW91},
show that any linear code is an evaluation code on an algebraic curve.
In some sense this work presents an analogue for (symmetric bilinear) multiplication algorithms:
any such multiplication algorithm is an evaluation-interpolation algorithm on
some algebraic space.

\entry
The theory of bilinear complexity started with the celebrated algorithms
of Karatsuba~\cite{KaOf63}, that allows to multiply two $2$-digit numbers with $3$ elementary
multiplications instead of $4$, and of Strassen~\cite{Strassen69}, that allows to multiply two $2\times2$
matrices with $7$ field multiplications instead of $8$.
Used recursively, these algorithms then allow to multiply numbers with a large number of digits,
or matrices of large size,
with a dramatic improvement on complexity over the naive methods.

General considerations on bilinear algorithms, in particular in relation with tensor decompositions,
can be found in \cite{Strassen73}\cite{BrDo78}.
In \cite{FiZa77}\cite{Winograd77} the theory is developed in terms
of multiplicative complexity of algebras,
with a special focus on quotient algebras of a polynomial ring in one indeterminate,
especially over a finite field.

A bilinear algorithm of length $n$ for a bilinear map $B$ over a field $\FF$
reduces the computation of $B$ to $n$ multiplications in $\FF$, plus some
fixed linear operations.
Expressed in coordinates, the bilinear map $B$ gives rise to a collection
of bilinear forms $B_1,\dots,B_k$.
The goal is then to find $n$ products of two linear forms, i.e. $n$ bilinear
forms of rank~$1$, whose linear span contains $B_1,\dots,B_k$.
This is how the problem is usually formulated in these early works.

In this text the bilinear map $B$ will always be the multiplication map
in a finite dimensional commutative algebra $\cA$.
Also we will consider only algorithms whose structure reflects this commutativity.
Rephrasing the definition in this particular context and in a more
coordinate-free way, a \emph{symmetric bilinear multiplication algorithm}
of length $n$ for $\cA$ over $\FF$ is a pair
of linear maps $\phi:\cA\to\FF^n$ and $\omega:\FF^n\to\cA$, such that
for all $a,a'\in\cA$ we have
\beqv
aa'=\omega(\phi(a)*\phi(a'))
\eeqv
where $*$ denotes componentwise multiplication in $\FF^n$.

\entry
In their important paper \cite{ChCh88}, Chudnovsky and Chudnovsky
highlighted several links between multiplication algorithms and codes.
Then, using a construction similar to Goppa's algebraic geometry codes,
they showed that, at least over a base finite field of (not too small)
square order,
\emph{evaluation-interpolation} on curves allows to produce
multiplication algorithms in an extension field of arbitrary
large degree, whose complexity remains linear with the degree.

Several improvements were then proposed by Shparlinski, Tsfasman, and
Vladut in \cite{STV92}: 
they showed how to deduce linearity of the complexity over an arbitrary
base finite field;
they introduced the notion of supercode,
that captures the part of the procedure that is mere linear algebra,
and they reduced what remains to a cleanly posed problem in algebraic geometry;
and they corrected certain statements or made some arguments in the proofs
of \cite{ChCh88} more precise,
in particular concerning the choice of the curves in the construction.
Then in \cite{Ballet99}, Ballet replaced certain geometric conditions with
more manageable numerical criteria.

In \cite{BaRo04}, Ballet and Rolland extended the Chudnovsky-Chudnovsky
method by allowing evaluation-interpolation at points of higher degree,
while \cite{Arnaud} used points with multiplicities, and \cite{CeOz10}
a combination of both.

Further contributions to the method were then proposed in \cite{HR-ChCh+}:
it improved the use of points of higher degree and with multiplicities on curves
(as a particular case of evaluation-interpolation
at arbitrary closed subschemes on arbitrary schemes);
it distinguished symmetric and asymmetric evaluation-interpolation algorithms;
and it fixed a geometric construction of \cite{STV92}, that allows to improve
the asymptotic bound of \cite{ChCh88}.

In another direction, works such as \cite{Oseledets08}\cite{BDEZ12}\cite{Covanov19}
aimed at exhaustively finding multiplication algorithms of optimal length
for algebras of small cardinality.

For recent results on bilinear multiplication algorithms in extensions of finite fields we refer to the extensive survey \cite{BCPRRR19}.
For a more general introduction to bilinear complexity we refer to \cite[Ch.~14]{BCS}.

\entry
Multiplication algorithms are closely related to multiplicative
(or ``arithmetic'') linear secret sharing schemes.
We refer to \cite{CDM00}\cite{ChCr06}
for foundational works on this topic, and to the book \cite{CDN}
for more on secure multi-party computation and secret sharing.
Roughly speaking, a multiplicative linear secret sharing scheme is
a multiplication algorithm that further admits certain threshold properties.

Because of this link, plain multiplication algorithms (without these threshold properties)
can be used as a technical tool in the construction of multiplicative
secret sharing schemes \cite{CCCX09}:
in this context, a multiplication algorithm that reduces a multiplication in $\F_{q^k}$ to $n$ multiplications in $\F_q$
is also called a ``multiplication-friendly embedding'' of $\F_{q^k}$ into $\F_q^n$.
In~\cite{CCXY18} the notion of ``reverse multiplication-friendly embedding'' (RMFE) is also introduced, which goes in the opposite direction:
a RMFE from $\F_q^k$ to $\F_{q^n}$ reduces $k$ multiplications in $\F_q$
to one multiplication in $\F_{q^n}$.
It turns out such RMFE can be constructed by evaluation-interpolation,
in a way that duplicates the Chudnovsky-Chudnovsky method.

This situation made it desirable to introduce a more general notion
of \emph{multiplication reduction} (Definition~\ref{def-algo}),
from an arbitrary finite-dimensional commutative algebra $\cA$
to another such arbitrary algebra $\cB$,
and to which evaluation-interpolation constructions can be applied
in a uniform framework (Proposition~\ref{evaluation-interpolation}).
As a by-product, this new notion of multiplication reduction also
allows to reinterpret the evaluation-interpolation algorithms
at points of higher degree and with multiplicities of \cite{BaRo04}\cite{Arnaud}\cite{CeOz10}\cite{HR-ChCh+}
discussed above.

For simplicity, in the remaining of this introduction we describe our
results only for multiplication algorithms.
Full statements and proofs for general multiplication reductions
will be found in the main text.
This will be best expressed in the language of abstract algebraic geometry,
although for ease of the reader we will also give alternative descriptions, in coordinates. 

\entry
Assume that $X$ is an algebraic space (e.g. a curve, a surface, etc.) together with a point $Q\in X(\cA)$
with coordinates in $\cA$, and a collection $\cP=(P_1,\dots,P_n)$ of rational points. 
In order to multiply $a,a'\in\cA$, we can proceed by evaluation-interpolation, as follows:
\begin{enumerate}[(i)]
\item Lift $a,a'$ to functions $f_a,f_{a'}$ on $X$ such that $f_a(Q)=a,f_{a'}(Q)=a'$, and then evaluate these functions at $\cP$
to get vectors $c_a=(f_a(P_1),\dots,f_a(P_n))$, $c_{a'}=(f_{a'}(P_1),\dots,f_{a'}(P_n))$.
\item Multiply $c_a$ and $c_{a'}$ componentwise, i.e. compute $y_1=f_a(P_1)f_{a'}(P_1)$, ..., $y_n=f_a(P_n)f_{a'}(P_n)$. 
\item By ``Lagrange interpolation'', find a function $h$ on $X$ that takes these values $h(P_1)=y_1$, ..., $h(P_n)=y_n$,
and then evaluate $h$ at $Q$.
Under proper hypotheses, we will get $h(Q)=aa'$ as wished.  
\end{enumerate}
This actually provides a multiplication algorithm for $\cA$:
step~(i) gives the map $\phi$,
step~(ii) gives the $n$ multiplications in $\FF$,
and step~(iii) gives $\omega$.

Our main result is that any multiplication algorithm is of this sort.
Actually we will provide two proofs for this result
(and for its generalization to multiplication reductions).
Our first version (Corollary~\ref{cortauto}) is somewhat tautological,
deriving from a formal play with the notion of supercode
(Proposition~\ref{trivialrepr}).

\entry
In a second approach, inspired by the so-called geometric view on coding theory,
we try to get a description with a more meaningful geometric content.
Let's first birefly review on which sort of spaces evaluation-interpolation is usually performed. 

Early multiplication algorithms, such as those of Karatsuba or Toom-Cook, are evaluation-interpolation algorithms on the
projective line.
Chudnovsky and Chudnovsky~\cite{ChCh88} introduced evaluation-interpolation algorithms on curves.
In \cite[Ex.~2.5]{HR-ChCh+} one can find an evaluation-interpolation algorithm
on the projective plane.

In section~\ref{dP} we will construct a multiplication algorithm for $\F_{2^5}$ over $\F_2$, of optimal length $n=13$,
by evaluation-interpolation on a del Pezzo surface.

In general we show (second half of Theorem~\ref{mainth}) that any multiplication algorithm $\phi$ can be realized as an evaluation-interpolation algorithm
on a naturally associated space, namely the \emph{quadratic hull} of the linear code image of $\phi$, defined below.

A converse result (first half of Theorem~\ref{mainth}) characterizes whether a given code is the image of a multiplication algorithm,
also in terms of its quadratic hull.
Informally, it says that any quadratic identity satisfied in the code should also be satisfied in $\cA$ (Remark~\ref{explain}).
It then translates as a geometric characterization of codes that come from a supercode (Corollary~\ref{charsupercode}).

Again, all this generalizes to multiplication reductions.

We observe a similarity between these results and those from \cite{PSW91}, in that they both are \emph{a posteriori}.
They are of no help if one's goal is to build better codes, or better multiplication
algorithms. What they provide is only a better abstract understanding of these objects.

\entry
The quadratic hull (French: \emph{enveloppe quadratique}) of a non-degenerate linear code $C\subset\FF^n$ can be seen either algebraically or geometrically.

Starting from the algebraic point of view, let $\GG\in\FF^{k\times n}$ be a generator matrix for $C$,
and let $P_1,\dots,P_n$ be the columns of $\GG$.
In \cite[sec.~3]{MMP11}, M\'arquez-Corbella, Mart\'inez-Moro, and Pellikaan set
\beqv
I_2(C)=\{q\in\FF[x_1,\dots,x_k]_2:\;q(P_1)=\dots=q(P_n)=0\},
\eeqv
the space of homogeneous quadratic forms that vanish at $P_1,\dots,P_n$,
which up to some linear transformation depends only on $C$.
Put more intrisically,
\beqv
I_2(C)=\ker(S^2C\overset{*}{\longto}C\deux),
\eeqv
where $S^2C$ is the second symmetric power of $C$,
and $C\deux$ is the \emph{square} of $C$, the linear span of pairwise $*$-products of codewords of $C$
(and similar spaces $I_t(C)$ are defined likewise for higher powers $C\deux[t]$).
They then show \cite[Th.~2]{MMP14} that if $C$ is an AG code with suitable numerical parameters, $I_2(C)$ allows to reconstruct the curve from which $C$ is defined.
Later on, these $I_t(C)$ were also considered in \cite[\S\S1.11-1.36]{HR-AGCT14}
as part of an intrisic description of the geometric view on codes and their powers.

Turning to geometry, we define the quadratic hull $Z_2(C)$ of $C$ as the zero locus of $I_2(C)$ in projective space:
\beqv
Z_2(C)=Z(I_2(C))\subset\PP^{k-1},
\eeqv
i.e. the intersection of all quadrics in $\PP^{k-1}$ that pass through $P_1,\dots,P_n$.

These two points of view, algebraic and geometric,
are equivalent, i.e. $I_2(C)$ and $Z_2(C)$ carry exactly the same information,
provided $Z_2(C)$ is considered with its scheme structure.
Indeed, giving a closed subscheme in $\PP^{k-1}$ is the same as giving its defining ideal.

When generalizing from multiplication algorithms to multiplication reductions,
codes have to be replaced with what we call $\cB$-codes.
The quadratic hull can also be defined in this context.

\entry
In general the quadratic hull of a code can be quite pathological,
and we have no a priori control on it.
It is thus interesting to have plenty of examples.

In \cite{BDEZ12} (see also \cite{Covanov19}) an algorithm is
given that allows to compute the exhaustive list of multiplication algorithms of optimal length for algebras of small cardinality.
In section~\ref{exp}, Tables 1-4, we describe the quadratic hull of the corresponding codes.
Moreover, Proposition~\ref{I2phiperpW}, Corollary~\ref{pointsZ2phi},
and Corollary~\ref{Z2phici} derive informations on the quadratic hull
directly from the linear span of the quadratic forms of rank~$1$
that constitute the multiplication algorithm. 

We complement this with several properties of the quadratic hull, of independent interest:
\begin{enumerate}[(i)]
\item The rational points $P$ of $Z_2(C)$ parameterize \emph{square-preserving} extensions of $C$, i.e. those $[n+1,k]$-codes $C_P$
with generator matrix $\GG_P=(\;\GG\;|P)$ such that
\beqv
\dim C_P\deux=\dim C\deux.
\eeqv
\item They also parameterize hyperplanes $H\subset C$ such that
\beqv
H\cdot C\subsetneq C\deux.
\eeqv
\item The rational points of the secant variety of $Z_2(C)$ parameterize a subset of
(and possibly all) the locus of hyperplanes $H\subset C$ that satisfy
\beqv
H\deux\subsetneq C\deux.
\eeqv
\end{enumerate}
The following is a slightly stronger version of the result from \cite{MMP11}\cite{MMP14} already mentioned:
\begin{enumerate}[(i)]
\setcounter{enumi}{3}
\item If $C$ is the evaluation code of some $L(D)$ at $n$ distinct rational points $P_1,\dots,P_n$
on a smooth projective curve $X$ of genus~$g$, with $\deg(D)>2g+1$ and $L(2D-\sum_iP_i)=0$, then
\beqv
Z_2(C)=X.
\eeqv
\end{enumerate}
All these properties actually extend to $\cB$-codes,
as covered in Propositions~\ref{extensions}, \ref{HC}, \ref{H2secante}, and~\ref{courbe} respectively.

Properties (iii)(iv)(v) are especially meaningful in the analysis of McEliece-type cryptosystems.
Beside \cite{MMP11}\cite{MMP14},
we refer to \cite{Wieschebrink10}\cite{MMP13}\cite{CGGOT14}\cite{MMPR14} for a sample of related works.

In (iv), it seems to be a difficult question to characterize those codes for which
the secant variety exactly coincides with the locus of hyperplanes whose square does not fill
the square of $C$. This is not true for all codes, but it holds at least
for certain MDS codes and for certain
evaluation codes on curves. Partial results of this sort will be found in~\cite{HR-secant}.

\entry[Acknowledgments]
The author greatly benefited from discussions with R.~Blache and E.~Hallouin on del Pezzo surfaces,
while preparing~\cite{BCHMNRR19}. This influenced the presentation of the example developed in section~\ref{dP}.

The author also thanks P.~Zimmermann and S.~Covanov for giving details about their results \cite{BDEZ12}\cite{Covanov19}.
In particular P.~Zimmermann provided the computer program that allowed the author to get the list of explicit optimal
formulae on which section~\ref{exp} is based.

\entry[Conventions]
In this work, by an algebra we will always mean an algebra that is commutative, associative,
with unity, nonzero, and of finite dimension (as a vector space) over a field $\FF$.

Also we will say ``multiplication algorithm (resp. reduction)'' as short for ``symmetric bilinear multiplication algorithm (resp. reduction)''.
It would not be difficult to devise asymmetric extensions of our results, but we will stick to the symmetric setting for simplicity.

If $\cB$ is an algebra and $\cP:\Spec\cB\to X$ is a morphism of schemes, then
for any invertible sheaf $\cL$ on $X$ we set $\cL|_{\cP}=\cP^*\cL$, the pullback
of $\cL$ by $\cP$. Observe that, since $\cB$ is finite, $\cL|_{\cP}\simeq\cB$
admits a trivialization.

\section{Multiplication reductions}
\label{sec:multred}

\begin{definition}
\label{def-algo}
A (symmetric bilinear) multiplication reduction
\beqv
\phi:\cA\leadsto\cB
\eeqv
from an algebra $\cA$ to an algebra $\cB$ over $\FF$,
is a linear map $\phi:\cA\to\cB$ that admits an ``adjoint'' linear
map $\omega:\cB\to\cA$ that makes the following diagram commute:\footnote{
It would make no fundamental difference to define a multiplication reduction
as the data of the pair $(\phi,\omega)$, instead of $\phi$ alone.
See \S\ref{adjoint} below for a discussion.}
\beqv
\begin{CD}
\cA\times\cA  @>m_{\cA}>>\cA\\
@V{\phi\times\phi}VV @AA{\omega}A\\
\cB\times\cB @>m_{\cB}>> \cB
\end{CD}
\eeqv
where $m_{\cA}$ and $m_{\cB}$ denote multiplication in $\cA$ and $\cB$,
respectively.

Equivalently, $\phi$ and $\omega$ should satisfy
\beqv
a\cdot a'=\omega(\phi(a)\cdot\phi(a'))
\eeqv
for all $a,a'\in\cA$.
In words, multiplication of $a$ and $a'$ in $\cA$ ``reduces'' (up to some fixed linear operations)
to multiplication of $\phi(a)$ and $\phi(a')$ in $\cB$.
\end{definition}

\begin{lemma}
A multiplication reduction $\phi:\cA\leadsto\cB$ is always injective.
\end{lemma}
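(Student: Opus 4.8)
The plan is to exploit, in an essential way, that $\cA$ is unital (part of our standing conventions). Fix an adjoint map $\omega:\cB\to\cA$ as in Definition~\ref{def-algo}, let $1_\cA$ denote the unity of $\cA$, and suppose $a\in\cA$ lies in $\ker\phi$, that is, $\phi(a)=0$; the goal is to deduce $a=0$.

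The only computation needed is to substitute the pair $(a,a')=(a,1_\cA)$ into the defining identity $a\cdot a'=\omega(\phi(a)\cdot\phi(a'))$. The left-hand side is then simply $a$, because $1_\cA$ is the unity. On the right-hand side, $\phi(a)=0$ gives $\phi(a)\cdot\phi(1_\cA)=0$ in $\cB$, whence $\omega(\phi(a)\cdot\phi(1_\cA))=\omega(0)=0$ by linearity of $\omega$. Comparing the two sides yields $a=0$, so $\ker\phi=0$ and $\phi$ is injective.

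There is no genuine difficulty in this argument; the point worth isolating is which hypotheses are used. Only the \emph{existence} of $\omega$ is invoked, not any finer property, so the same one-line reasoning will transfer verbatim to every later setting where a multiplication reduction appears. And unitality of $\cA$ is exactly what makes it work: dropping that axiom the statement already fails, since the zero map out of a $2$-dimensional space equipped with identically zero multiplication would then be a non-injective ``multiplication reduction''.
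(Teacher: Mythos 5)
Your argument is correct and is exactly the paper's own proof: substitute $a' = 1_\cA$ with $a \in \ker\phi$ into the defining identity and conclude $a = \omega(0) = 0$. The extra remarks on which hypotheses are used are sound but not part of the paper's (one-line) proof.
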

\begin{proof}
Apply the formula above with $a\in\ker\phi$ and $a'=1$.
\end{proof}

\entry
Here are three important instances of multiplication reductions:
\begin{itemize}
\item A multiplication \emph{algorithm} of length $n$
for an algebra $\cA$ over $\FF$ is a multiplication reduction
\beqv
\cA\leadsto\FF^n
\eeqv
where $\cB=\FF^n$ is equipped with componentwise multiplication $*$.

Equivalently, if $\phi_1,\dots,\phi_n\in\cA^\vee$ are the components of $\phi:\cA\to\FF^n$,
and $\omega_1,\dots,\omega_n\in\cA$ are those of $\omega:\FF^n\to\cA$, one asks
\beqv
aa'=\phi_1(a)\phi_1(a')\omega_1+\cdots+\phi_n(a)\phi_n(a')\omega_n
\eeqv
for all $a,a'\in\cA$.
In words, a multiplication in $\cA$ reduces (up to some fixed linear operations)
to $n$ multiplications in $\FF$.

This also amounts to a decomposition
\beqv
T_{\cA}=\omega_1\tens\phi_1^{\tens2}+\cdots+\omega_n\tens\phi_n^{\tens2}
\eeqv
of the multiplication tensor of $\cA$ into $n$ elementary (symmetric) tensors.

As already discussed in the Introduction, multiplication algorithms are closely related to multiplicative linear secret sharing schemes.
In this context they are sometimes called ``multiplication-friendly embeddings'' \cite{CCCX09}.

A large part of the literature on multiplication algorithms (surveyed in \cite{BCPRRR19})
is devoted to the case where $\cA=\F_{q^m}[t]/(t^l)$ is a monogeneous local algebra over $\FF=\F_q$ a finite field.
This includes the cases $\cA=\F_{q^k}$ (finite field extension) and $\cA=\F_q[t]/(t^k)$ (truncated polynomials).
However, the case of a more general local algebra $\cA$ over an arbitrary field $\FF$,
for instance $\cA=\FF[[t_1,\dots,t_m]]/(t_1^l,\dots,t_m^l)$, is also of interest.

As for non-local algebras, the issue is related to Strassen's direct sum conjecture \cite{Strassen73}\cite{BrDo78}.
Indeed, such an algebra decomposes as the product of its localizations: $\cA=\prod_i\cA_i$.
The conjecture asserts that a multiplication algorithm for $\cA$ of minimal length should
decompose as a product (in the sense of~\ref{produit} below) of multiplication algorithms for each $\cA_i$. 

\item
More general multiplication reductions often occur as an intermediate step in the construction
of multiplication algorithms.
This includes generalized Chud\-nov\-sky-Chudnovsky type constructions that use evaluation-interpolation on curves
at points of higher degree and/or with multiplicities.
For instance (the symmetric part of) \cite[Th.~3.5]{HR-ChCh+} can be analyzed in two steps:
first, a multiplication \emph{reduction}
\beqv
\cA=\F_{q^m}[t]/(t^l)\;\leadsto\;\cB=\prod_i\F_{q^{d_i}}[t]/(t^{u_i})
\eeqv
is constructed, and then it is composed (in the sense of \ref{compose} below)
with the product of multiplication algorithms $\F_{q^{d_i}}[t]/(t^{u_i})\leadsto\F_q^{\mu_q^{\textrm{sym}}(d_i,u_i)}$,
to get a multiplication \emph{algorithm}
\beqv
\cA=\F_{q^m}[t]/(t^l)\;\leadsto\;\F_q^N
\eeqv
with $N=\sum_i\mu_q^{\textrm{sym}}(d_i,u_i)$.
\item
Last, reverse multiplication-friendly embeddings (RMFE), recently introduced in~\cite{CCXY18},
also are multiplication reductions, namely of the form
\beqv
\cA=\F_q^k\;\leadsto\;\cB=\F_{q^n}.
\eeqv
As such, the results presented in this work apply to them as well.
\end{itemize}

\begin{definition}
Let $\cB$ be an algebra over $\FF$. By a $\cB$-code we mean a linear subspace
\beqv
C\subset\cB.
\eeqv
If $C$ has dimension $k$ over $\FF$, we also say $C$ is a $[\cB,k]$-code.

In the particular case $\cB=\FF^n$, we say $C$ is a $[n,k]$-code, compatibly with the established literature.
\end{definition}

\entry
\label{adjoint}
Let $\phi:\cA\leadsto\cB$ be a multiplication reduction,
and consider its image $C_\phi=\phi(\cA)\subset\cB$.
It is a $[\cB,k]$-code, where $k=\dim\cA$.

If $x\in\cB$ is of the form $x=cc'$, where $c=\phi(a),c'=\phi(a')$ for $a,a'\in\cA$,
then necessarily one should have $\omega(x)=aa'$ in $\cA$.
By linearity, this condition \emph{uniquely determines}
the values of $\omega$ on the linear subspace
\beqv
C_\phi\deux=\linspan{\{cc'\,:\; c,c'\in C_\phi\}}\;\subset\cB
\eeqv
spanned by pairwise products of codewords (under $m_{\cB}$), called the \emph{square} of $C_\phi$.

By a slight abuse, the resulting uniquely determined map
\beqv
\omega:C_\phi\deux\longto\cA
\eeqv
will also be called the adjoint of $\phi$.

Conversely, values of $\omega$ outside of $C_\phi\deux$ do not matter in Definition~\ref{def-algo}.

This means we can take the commutativity of the diagram
\beqv
\begin{CD}
\cA\times\cA  @>m_{\cA}>>\cA\\
@V{\phi\times\phi}VV @AA{\omega}A\\
C_\phi\times C_\phi @>m_{\cB}>> C_\phi\deux
\end{CD}
\eeqv
as an alternative, equivalent definition for a multiplication reduction.
Indeed, once this holds, we can pick an arbitrary linear extension of $\omega$ from $C_\phi\deux$ to the whole
of $\cB$, and the condition in Definition~\ref{def-algo} will be satisfied.

It is customary in bilinear complexity theory to focus solely on the number
of bilinear operations in the algorithms, ignoring linear operations
as costless.
From this point of view, specifying the choice of $\omega$ in Definition~\ref{def-algo} would be somehow redundant.
However, if one is interested in practical implementations, it could happen
that some clever choice of an extension of $\omega$ from $C_\phi\deux$
to the whole of $\cB$ would give a better total complexity.


\entry
\label{equivalence}
Let $\phi:\cA\leadsto\cB$ be a multiplication reduction.
Given invertible elements $u\in\cA^\times$ and $v\in\cB^\times$,
define $\tilde{\phi}:\cA\to\cB$ by
\beqv
\tilde\phi(a)=v\phi(u^{-1}a)
\eeqv
for all $a\in\cA$.
Then $\tilde\phi$ is a multiplication reduction $\cA\leadsto\cB$.
Indeed, if $\phi$ has adjoint $\omega:C_\phi\deux\to\cA$,
then $\tilde\phi$ has adjoint $\tilde\omega:C_{\tilde\phi}\deux\to\cA$
where $\tilde\omega(b)=u^2\omega(v^{-2}b)$
for all $b\in C_{\tilde\phi}\deux=v^2C_\phi\deux\subset\cB$.

\begin{definition*}
We say two multiplication reductions $\phi,\tilde\phi:\cA\leadsto\cB$ are \emph{diagonally equivalent},\footnote{
one could devise another natural notion of equivalence, coarser, by further allowing composition
on the right with automorphisms of $\cA$ and on the left with automorphisms of $\cB$.}
and we write
\beqv
\phi\sim\tilde\phi,
\eeqv
if they can be related as above for some $u\in\cA^\times$, $v\in\cB^\times$.

This defines an equivalence relation on the set of multiplication reductions from $\cA$ to $\cB$.
\end{definition*}

\entry
\label{produit}
There is a notion of product for multiplication reductions: given $\phi_1:\cA_1\leadsto\cB_1$
and $\phi_2:\cA_2\leadsto\cB_2$,
we derive $\phi_1\times\phi_2:\cA_1\times\cA_2\leadsto\cB_1\times\cB_2$,
where each product algebra is equipped with componentwise multiplication.
If $\phi_1$ has adjoint $\omega_1$ and $\phi_2$ has adjoint $\omega_2$,
then $\phi_1\times\phi_2$ has adjoint $\omega_1\times\omega_2$.

This is compatible with diagonal equivalence:
if $\phi_1\sim\tilde\phi_1$
and $\phi_2\sim\tilde\phi_2$,
then $\phi_1\times\phi_2\,\sim\,\tilde\phi_1\times\tilde\phi_2$.

\entry
\label{compose}
Likewise, multiplication reductions can be composed (or concatenated):
given \mbox{$\phi:\cA\leadsto\cB$} and $\psi:\cB\leadsto\cC$,
we derive $\psi\circ\phi:\cA\leadsto\cC$.
If $\phi$ has adjoint $\omega$ and $\psi$ has adjoint $\varpi$,
then $\psi\circ\phi$ has adjoint $\omega\circ\varpi$.

However, in general, composition is \emph{not} compatible with diagonal equivalence,
i.e. it could happen that $\phi\sim\tilde\phi$ and $\psi\sim\tilde\psi$
but $\psi\circ\phi\,\not\sim\,\tilde\psi\circ\tilde\phi$.

\begin{proposition}\label{evaluation-interpolation}
Let $X$ be a scheme over $\FF$, 
together with points $Q\in X(\cA)$ and $\cP\in X(\cB)$ with coordinates in $\cA$ and $\cB$ respectively,
i.e. morphisms $Q:\Spec\cA\to X$ and $\cP:\Spec\cB\to X$.

Let $\mathcal{L}$ be an invertible sheaf on $X$, and $V\subset\Gamma(X,\mathcal{L})$ a finite dimensional linear system.
Let also $V\deux\subset\Gamma(X,\mathcal{L}^{\tens 2})$ be its square,
i.e. the linear system spanned by all pairwise products of elements of $V$.
Consider the natural restriction (i.e. pullback) maps from $V$ to $\cL|_Q$ and from $V\deux$ to $\cL^{\tens 2}|_{\cP}$, and assume
\begin{enumerate}[(i)]
\item $V\longto\cL|_Q$ is surjective
\item $V\deux\longto\cL^{\tens 2}|_{\cP}$ is injective.
\end{enumerate}
Then there is a multiplication reduction
\beqv
\phi:\cA\leadsto\cB
\eeqv
given by the following construction:
\begin{itemize}
\item
choose trivializations $\cL|_Q\simeq\cA$ and $\cL|_{\cP}\simeq\cB$ (recall $\cA,\cB$ finite over $\FF$)
\item
choose a right-inverse $\sigma:\cA\to V$ of the surjective map $V\to\cL|_Q\simeq\cA$
\item
compose $\sigma$ with the ``evaluation-at-$\cP$'' map $V\to\cL|_{\cP}\simeq\cB$,
to get the linear map $\phi:\cA\to\cB$.
\end{itemize}
Moreover, if we assume that $V\to\cL|_Q$ is \emph{bijective},
then the diagonal equivalence class of $\phi$ so constructed is \emph{independent} of the choices made.
\end{proposition}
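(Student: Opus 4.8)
The plan is to exploit that evaluation at $Q$ and evaluation at $\cP$ are both multiplicative with respect to the natural product $V\times V\to V\deux$, and that hypothesis~(ii) makes evaluation at $\cP$ invertible on $V\deux$ — which is precisely what is needed to manufacture the adjoint $\omega$. Write $r_Q\colon V\to\cL|_Q$, $r_\cP\colon V\to\cL|_\cP$, $r_Q\deux\colon V\deux\to\cL^{\tens 2}|_Q$ and $r_\cP\deux\colon V\deux\to\cL^{\tens 2}|_\cP$ for the restriction (i.e. pullback) maps, and let $\tau_Q\colon\cL|_Q\overset{\sim}{\longto}\cA$, $\tau_\cP\colon\cL|_\cP\overset{\sim}{\longto}\cB$ be the chosen trivializations. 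Put $\ev_Q=\tau_Q\circ r_Q$, $\ev_\cP=\tau_\cP\circ r_\cP$, and $\ev_Q\deux=\tau_Q^{\tens 2}\circ r_Q\deux$, $\ev_\cP\deux=\tau_\cP^{\tens 2}\circ r_\cP\deux$, the latter two being built from the \emph{tensor squares} of the trivializations. Since pullback commutes with tensor product one has $r_Q\deux(fg)=r_Q(f)\tens r_Q(g)$ (tensor over $\cA$), and since $\tau_Q^{\tens 2}$ is the tensor square of an $\cA$-linear isomorphism it carries $x\tens y\mapsto\tau_Q(x)\tau_Q(y)$; hence
\beqv
\ev_Q\deux(fg)=\ev_Q(f)\,\ev_Q(g),\qquad\ev_\cP\deux(fg)=\ev_\cP(f)\,\ev_\cP(g)
\eeqv
for all $f,g\in V$. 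Hypothesis~(i) makes $\ev_Q$ surjective, so a right inverse $\sigma$ exists; hypothesis~(ii) says $r_\cP\deux$, hence $\ev_\cP\deux$, is injective on $V\deux$.

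Next I would build $\omega$. With $\phi=\ev_\cP\circ\sigma$ and $V_0=\sigma(\cA)\subset V$, one gets $C_\phi=\ev_\cP(V_0)$ and, for $a,a'\in\cA$,
\beqv
\phi(a)\phi(a')=\ev_\cP(\sigma a)\,\ev_\cP(\sigma a')=\ev_\cP\deux\bigl(\sigma(a)\,\sigma(a')\bigr),
\eeqv
where $\sigma(a)\sigma(a')$ lies in $W:=\linspan{\{fg:\;f,g\in V_0\}}\subset V\deux$. By linearity $C_\phi\deux=\ev_\cP\deux(W)$, and since $\ev_\cP\deux$ is injective on $V\deux\supset W$ it restricts to an isomorphism $\ev_\cP\deux|_W\colon W\overset{\sim}{\longto}C_\phi\deux$. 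I then set
\beqv
\omega=\ev_Q\deux|_W\circ\bigl(\ev_\cP\deux|_W\bigr)^{-1}\colon\;C_\phi\deux\longto\cA .
\eeqv
Chasing definitions, $\omega(\phi(a)\phi(a'))=\ev_Q\deux\bigl(\sigma(a)\sigma(a')\bigr)=\ev_Q(\sigma a)\,\ev_Q(\sigma a')=aa'$, using $\ev_Q\circ\sigma=\id_\cA$. By the discussion in~\S\ref{adjoint}, the commutativity of this diagram on $C_\phi\deux$ exhibits $\phi$ as a multiplication reduction $\cA\leadsto\cB$ (extend $\omega$ arbitrarily to all of $\cB$).

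Finally, for the last assertion assume $\ev_Q$ is bijective; then $\sigma=\ev_Q^{-1}$ is forced, and the only remaining freedom is in the trivializations. The trivializations of $\cL|_Q$ form a torsor under $\cA^\times$, so a second choice is $u\cdot\tau_Q$ for some $u\in\cA^\times$, which replaces $\ev_Q$ by $u\cdot\ev_Q$ and hence $\ev_Q^{-1}$ by $a\mapsto\ev_Q^{-1}(u^{-1}a)$; likewise a second trivialization of $\cL|_\cP$ is $v\cdot\tau_\cP$ with $v\in\cB^\times$, replacing $\ev_\cP$ by $v\cdot\ev_\cP$. The resulting map is
\beqv
\phi'(a)=v\cdot\ev_\cP\bigl(\ev_Q^{-1}(u^{-1}a)\bigr)=v\,\phi(u^{-1}a),
\eeqv
which is exactly the relation defining diagonal equivalence in~\S\ref{equivalence}; hence $\phi'\sim\phi$ and the class is independent of the choices. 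All these steps are routine; the only point calling for care is bookkeeping — ensuring that on $\cL^{\tens 2}|_{(-)}$ one uses the tensor square of the trivialization chosen on $\cL|_{(-)}$, which is what makes the $\ev$ maps multiplicative for $V\times V\to V\deux$, and, in the last step, matching the $\cA^\times\times\cB^\times$ action with the normalization $\tilde\phi(a)=v\phi(u^{-1}a)$ of~\S\ref{equivalence}. There is no essential difficulty beyond this.
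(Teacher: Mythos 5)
Your proof is correct and follows essentially the same strategy as the paper's: trivialize $\cL$ at $Q$ and $\cP$, use the tensor squares of these trivializations on $\cL^{\tens 2}$ so that evaluation becomes multiplicative, build $\omega$ by inverting the injective evaluation $V\deux\to\cL^{\tens2}|_\cP$, and for the uniqueness statement trace how changing trivializations by $u\in\cA^\times$, $v\in\cB^\times$ reproduces the diagonal equivalence of~\ref{equivalence}. The one cosmetic difference is that the paper picks an arbitrary left-inverse $\rho:\cB\to V\deux$ and sets $\omega=\ev_Q\deux\circ\rho$ on all of $\cB$, whereas you restrict $\ev_\cP\deux$ to the subspace $W=\linspan{\sigma(\cA)\cdot\sigma(\cA)}$, invert there, and define $\omega$ only on $C_\phi\deux$ before extending arbitrarily; this is slightly more economical but amounts to the same thing, since (as noted in~\ref{adjoint}) $\omega$ is uniquely determined on $C_\phi\deux$ anyway.
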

(Observe that if we're interested in multiplication algorithms, then $\cB=\FF^n$,
and we can see $\cP$ as an ordered collection $\cP=(P_1,\dots,P_n)$ of $n$ rational points $P_i\in X(\FF)$. This explains our notation.
But on the other hand, if we're interested in RMFE $\cA=\F_q^k\leadsto\cB=\F_{q^n}$, then $Q=(Q_1,\dots,Q_k)$ is a collection of $k$ rational points,
while $\cP$ is a point defined over an extension.
Our formalism encompasses both situations, and even more general ones,
in a uniform way.) 
\begin{proof}
Given $f\in V$, write $f(Q)$ and $f(\cP)$ for the images of $f$ under the evaluation maps $V\to\cL|_Q\simeq\cA$ and $V\to\cL|_{\cP}\simeq\cB$, respectively.\footnote{e.g. when $\cB=\FF^n$ and $\cP=(P_1,\dots,P_n)$, we can write $f(\cP)=(f(P_1),\dots,f(P_n))\in\FF^n$.}

Likewise, given $h\in V\deux$, write $h(Q)$ and $h(\cP)$ for the images of $h$ under $V\deux\to\cL^{\tens 2}|_Q\simeq\cA$ and $V\to\cL^{\tens 2}|_{\cP}\simeq\cB$,
where the trivializations $\cL^{\tens 2}|_Q\simeq\cA$ and $\cL^{\tens 2}|_{\cP}\simeq\cB$ are deduced from
the ones chosen previously by passing to the square.
This makes evaluation commute with multiplication:
for $f,f'\in V$ we have $(ff')(Q)=f(Q)f'(Q)$ in $\cA$,
and likewise $(ff')(\cP)=f(\cP)f'(\cP)$ in $\cB$.

Now we construct the adjoint linear map $\omega:\cB\to\cA$,
and then we check that $\phi,\omega$ indeed satisfy the condition in Definition~\ref{def-algo}.

Since $V\deux\to\cL^{\tens 2}|_{\cP}\simeq\cB$ is injective, we can choose a left-inverse $\rho:\cB\to V\deux$.
We then compose $\rho$ with the evaluation-at-$Q$ map $V\deux\to\cL^{\tens 2}|_Q\simeq\cA$
to get a linear map $\omega:\cB\to\cA$.

Let $a,a'\in\cA$. By construction, $\sigma(a),\sigma(a')$ are elements $f_a,f_{a'}\in V$
with $f_a(Q)=a$, $f_{a'}(Q)=a'$, and then $\phi(a)=f_a(\cP)$, $\phi(a')=f_{a'}(\cP)$ in~$\cB$.
Since evaluation and multiplication commute,
we get $\phi(a)\phi(a')=f_a(\cP)f_{a'}(\cP)=(f_af_{a'})(\cP)$.
By construction, $\rho((f_af_{a'})(\cP))$ is then an element $h\in V\deux$ such that $h(\cP)=(f_af_{a'})(\cP)$.
By injectivity of $V\deux\to\cL^{\tens 2}|_{\cP}$, this forces $h=f_af_{a'}$.
And then, commuting evaluation and multiplication again,
we get $\omega(\phi(a)\phi(a'))=(f_af_{a'})(Q)=f_a(Q)f_{a'}(Q)=aa'$, as desired.

As for the last statement, observe that in the first step of the construction,
another choice of trivializations multiplies the evaluation-at-$Q$ and -at-$\cP$ maps
by some $u\in\cA^\times$ and $v\in\cB^\times$, respectively.
Moreover if $V\to\cL|_Q$ is bijective, then in the second step there is only one choice
for the right inverse, which is actually the inverse of the evaluation-at-$Q$ map.
All in all this replaces $\phi$ with an equivalent map precisely as in~\ref{equivalence}.
\end{proof}

\begin{definition}
A multiplication reduction $\phi:\cA\leadsto\cB$ is geometric,
or is of evaluation-interpolation type,
if it can be obtained through Construction~\ref{evaluation-interpolation}.

Conversely, in this situation, we say that $(X,Q,\cP,\cL,V)$ 
is a \emph{geometric realization} of the reduction $\phi$.
\end{definition}

This can be compared with the generalized Chudnovsky-Chudnovsky constructions
surveyed in~\cite{BCPRRR19}, that use evaluation-interpolation on curves at points of higher degree
and/or with multiplicities.
This fits in our formalism of multiplication reductions. But if we insist on having a multiplication \emph{algorithm},
so $\cB=\FF^n$, the key point then is that we ask for evaluation-interpolation at $\cP=(P_1,\dots,P_n)$,
a collection of \emph{rational points, without multiplicities}.
But on the other hand, we allow $X$ to be a higher dimensional variety,
or even an arbitrary scheme.

\begin{remark}
Some arguments suggest that ``good'' geometric multiplication reductions should actually be low dimensional. 

For instance, suppose given an algebra $\cA$, of dimension $k$, and consider a geometric multiplication
algorithm $\phi:\cA\leadsto\FF^n$, of length $n$ as small as possible.
Then condition (i) forces $\dim V\geq k$, and condition (ii) forces $\dim V\deux\leq n$.
Thus, if $X$ is a variety (i.e. an \emph{integral} scheme), and if we assume
\beqv
n\leq 3k-4,
\eeqv
then the field analogue \cite{BCZ18} of Freiman's theorem applies
and shows that the subfield generated by $V^{-1}V\subset\FF(X)$ is the function field $\FF(Y)$
of a curve $Y$.
With a little extra work, one could then prove that $\phi$ can be realized over
this curve $Y$.

Interestingly, if $\FF$ is the finite field $\F_p$ for $p\geq7$ prime, and $\cA=\F_{p^k}$ is its degree $k$ field
extension for $k$ large, then the best constructions \cite[\S6]{HR-ChCh+}
allow to get $n\approx 3\left(1+\frac{2}{p-2}\right)k$, which is slightly above this $3k-4$ bound.
So the argument does not apply there.
\end{remark}

\entry
The notion of \emph{supercode} introduced in \cite{STV92} for multiplication algorithms
is easily extended in the context of multiplication reductions.
We define a supercode as a $\FF$-linear subspace
\beqv
\widehat{C}\subset\cA\times\cB
\eeqv
such that
the first projection $\pi_{\cA}:\widehat{C}\to\cA$ is surjective,
and the second projection $\pi_{\cB}:\widehat{C}\deux\to\cB$ is injective,
where as usual $\widehat{C}\deux$ is the $\FF$-linear span of
pairwise products of elements of $\widehat{C}$ in the product algebra $\cA\times\cB$.

If $\phi:\cA\leadsto\cB$ is a multiplication reduction,
then $\widehat{C}=\im(\id_{\cA}\times\phi)\subset\cA\times\cB$ is a supercode.

\entry
\label{trivialrepr}
Conversely, if $\widehat{C}\subset\cA\times\cB$ is a supercode,
then $\widehat{C}$ gives rise to a multiplication reduction. This can be seen in our geometric formalism.

Set $X=\Spec(\cA\times\cB)$,
and consider the points $Q\in X(\cA)$ given by the first projection $\cA\times\cB\to\cA$,
and $\cP\in X(\cB)$ given by the second projection $\cA\times\cB\to\cB$.

Also set $\cL=\cO_X$, the structure sheaf,
and consider then $\widehat{C}$ as an incomplete linear system $\widehat{C}\subset\Gamma(X,\cO_X)=\cA\times\cB$.
\begin{proposition*}
With these notations, $(X,Q,\cP,\cO_X,\widehat{C})$ is the geometric realization of a multiplication reduction $\cA\leadsto\cB$.

Moreover, if $\widehat{C}=\im(\id_{\cA}\times\phi)$ comes from a multiplication reduction $\phi:\cA\leadsto\cB$,
then actually $(X,Q,\cP,\cO_X,\widehat{C})$ is a geometric realization for $\phi$.
\end{proposition*}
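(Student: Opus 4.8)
The plan is to unwind the definitions and check that the hypotheses of Proposition~\ref{evaluation-interpolation} are exactly the two conditions defining a supercode, so that the construction there applies verbatim. First I would identify the objects concretely. Since $X=\Spec(\cA\times\cB)$ and $\cL=\cO_X$, we have $\Gamma(X,\cL)=\cA\times\cB$, and $\widehat{C}\subset\cA\times\cB$ is the prescribed linear system $V$. The point $Q\in X(\cA)$ dual to the projection $\cA\times\cB\to\cA$ pulls back $\cO_X$ to $\cA$ (trivially), and the evaluation map $V\to\cL|_Q\simeq\cA$ is just the restriction to $\widehat{C}$ of the first projection $\pi_{\cA}$; likewise the evaluation-at-$\cP$ map $V\to\cL|_{\cP}\simeq\cB$ is the restriction of $\pi_{\cB}$. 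Next I would note that $\widehat{C}\deux$, the span of pairwise products of elements of $\widehat{C}$ in the product algebra $\cA\times\cB$, coincides with $V\deux\subset\Gamma(X,\cL^{\tens2})=\cA\times\cB$, because on the structure sheaf the product in $\Gamma$ is literally componentwise multiplication in $\cA\times\cB$, and the square-of-trivializations bookkeeping in the proof of Proposition~\ref{evaluation-interpolation} is the identity here.

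With these identifications, hypothesis (i) of Proposition~\ref{evaluation-interpolation} — that $V\to\cL|_Q$ is surjective — says precisely that $\pi_{\cA}:\widehat{C}\to\cA$ is surjective, which is the first supercode axiom. Hypothesis (ii) — that $V\deux\to\cL^{\tens2}|_{\cP}$ is injective — says precisely that $\pi_{\cB}:\widehat{C}\deux\to\cB$ is injective, the second supercode axiom. So both hypotheses hold, and Proposition~\ref{evaluation-interpolation} produces a multiplication reduction $\phi:\cA\leadsto\cB$ with geometric realization $(X,Q,\cP,\cO_X,\widehat{C})$. This proves the first assertion.

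For the second assertion, suppose $\widehat{C}=\im(\id_{\cA}\times\phi)$ for a given multiplication reduction $\phi:\cA\leadsto\cB$. I would show that the construction of Proposition~\ref{evaluation-interpolation}, applied to this data, returns $\phi$ on the nose. The key observation is that $\pi_{\cA}:\widehat{C}\to\cA$ is now bijective: its inverse is $a\mapsto(a,\phi(a))$. Hence in the construction there is a unique right-inverse $\sigma:\cA\to V=\widehat{C}$ of the evaluation-at-$Q$ map, namely $\sigma(a)=(a,\phi(a))$, and composing with evaluation-at-$\cP$, i.e. with $\pi_{\cB}$, yields $a\mapsto\phi(a)$. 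Thus the reduction constructed is $\phi$ itself (not merely something diagonally equivalent to it, since with $\pi_{\cA}$ bijective the only remaining freedom is the trivialization, which we take to be the canonical one $\cO_X|_Q=\cA$). This identifies $(X,Q,\cP,\cO_X,\widehat{C})$ as a geometric realization of $\phi$.

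The only step requiring care — and the one I would expect to be the mild obstacle — is the bookkeeping around the trivializations and the claim that squaring the linear system on $\cO_X$ is compatible with componentwise multiplication in $\cA\times\cB$: one must make sure that the canonical trivializations $\cO_X|_Q\simeq\cA$ and $\cO_X|_{\cP}\simeq\cB$ are used consistently on both $V$ and $V\deux$, so that $(\widehat{c}\,\widehat{c}\,{}')(Q)=\widehat{c}(Q)\,\widehat{c}\,{}'(Q)$ and similarly at $\cP$. This is immediate once one remembers that on the structure sheaf all the maps in sight are just the two projections, but it is worth spelling out since the general Proposition~\ref{evaluation-interpolation} phrases it in terms of an abstract invertible sheaf.
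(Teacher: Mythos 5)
Your proof is correct and follows essentially the same approach as the paper's: apply Proposition~\ref{evaluation-interpolation}, observing that the two supercode axioms translate precisely into hypotheses (i) and (ii) under the canonical identifications $\Gamma(X,\cO_X)=\cA\times\cB$, $\cO_X|_Q\simeq\cA$, $\cO_X|_\cP\simeq\cB$, and then for the second assertion use the bijectivity of $\pi_\cA$ on $\widehat{C}$ to see that the unique section $\sigma(a)=(a,\phi(a))$ composed with $\pi_\cB$ recovers $\phi$ itself. The paper's version is more terse, but the substance is identical; your extra care about the trivializations on $\cL^{\tens 2}$ is a legitimate point to spell out and does hold trivially since $\cL=\cO_X$.
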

\begin{proof}
Proposition~\ref{evaluation-interpolation} applies, since the surjectivity
and injectivity conditions in the definition of a supercode reflect conditions (i) and (ii).

Moreover, if $\widehat{C}=\im(\id_{\cA}\times\phi)$, then we have bijectivity in (i),
and lifting the first projection from $\cA$ back to $\widehat{C}$ and projecting to $\cB$ indeed gives $\phi$, by construction.
\end{proof}

It follows at once:

\begin{corollary}\label{cortauto}
Any multiplication reduction is geometric.
\end{corollary}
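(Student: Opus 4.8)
The plan is to deduce this at once from the second half of Proposition~\ref{trivialrepr}, so the only work is to set up the bookkeeping. First I would take an arbitrary multiplication reduction $\phi:\cA\leadsto\cB$ and pass to its associated supercode $\widehat{C}=\im(\id_{\cA}\times\phi)\subset\cA\times\cB$; that this really is a supercode has already been observed, since $\pi_{\cA}$ restricted to $\widehat{C}$ is even bijective, and the injectivity of $\pi_{\cB}$ on $\widehat{C}\deux$ is precisely the condition of Definition~\ref{def-algo} read through the adjoint $\omega$ on $C_\phi\deux$.

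Next I would apply Proposition~\ref{trivialrepr} to this $\widehat{C}$: with $X=\Spec(\cA\times\cB)$, with $Q\in X(\cA)$ and $\cP\in X(\cB)$ the points induced by the two projections of $\cA\times\cB$, and with $\cL=\cO_X$ and $V=\widehat{C}$, that proposition says that $(X,Q,\cP,\cO_X,\widehat{C})$ is a geometric realization, and — since $\widehat{C}$ comes from a reduction — that it is in fact a geometric realization of $\phi$ itself. Thus $\phi$ is obtained through Construction~\ref{evaluation-interpolation}, which is exactly what it means to be geometric. As $\phi$ was arbitrary, every multiplication reduction is geometric.

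I do not expect a genuine obstacle: all the content has been front-loaded into Proposition~\ref{evaluation-interpolation} (that the surjectivity/injectivity conditions already suffice to manufacture a reduction) and into Proposition~\ref{trivialrepr} (that an arbitrary supercode meets those conditions on the affine scheme $\Spec(\cA\times\cB)$ with the trivial invertible sheaf). The one subtlety worth recording is that Construction~\ref{evaluation-interpolation} recovers $\phi$ on the nose and not merely up to diagonal equivalence; but this is immediate, since the bijectivity of $\pi_{\cA}$ on $\widehat{C}=\im(\id_{\cA}\times\phi)$ removes the only ambiguity in the construction (the choice of right-inverse $\sigma$ of the evaluation-at-$Q$ map), and lifting along that bijection and then evaluating at $\cP$ returns $\phi$ by the very definition of $\widehat{C}$.
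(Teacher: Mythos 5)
Your argument is exactly the paper's: form the supercode $\widehat{C}=\im(\id_{\cA}\times\phi)$ and invoke the second half of Proposition~\ref{trivialrepr} to obtain the geometric realization $(\Spec(\cA\times\cB),Q,\cP,\cO_X,\widehat{C})$ for $\phi$. The paper simply records this as ``it follows at once,'' so your write-up is a correct unpacking of the same route.
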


Unfortunately, this result, based on the tautological, zero-dimensional space $\Spec(\cA\times\cB)$, is quite disappointing.\footnote{
A similarly tautological statement is that any linear code $C$ is the evaluation
code, at the $n$ points of $\Spec(\FF^n)$, of $C$ itself, seen as an incomplete
linear system in $\Gamma(\Spec(\FF^n),\cO_{\Spec(\FF^n)})$.}
One aim of this work will be to present geometric realizations with a more interesting structure.

This will allow us to also address the following question:
how can one tell if a given linear map $\phi:\cA\to\cB$ is a multiplication reduction?
Or instead of specifying the whole map $\phi$, one can give only its image $C$.
The question then reduces to the following, from~\cite{STV92}: how can one tell if a given code $C\subset\cB$
is the second projection of a supercode $\widehat{C}\subset\cA\times\cB$?
To these questions we will provide a geometric answer, albeit perhaps not an algorithmic criterion, as could be even more desirable.

\section{An example in dimension $2$}
\label{dP}

\entry
Here we work over the base field $\FF=\F_2$.
We will construct a geometric multiplication algorithm
\beqv
\F_{2^5}\leadsto\F_2^{13}
\eeqv
by evaluation-interpolation on a surface.

It turns out this length $n=13$ is best possible for a multiplication algorithm
for $\F_{2^5}$ over $\F_2$, i.e. it reaches the \emph{(symmetric) bilinear complexity} 
\beqv
\mu_2^{\textrm{sym}}(5)=13.
\eeqv

Indeed, no such algorithm exists in length $12$ or less, because otherwise,
it would imply the existence of a $[12,5,5]$ binary linear code \cite[Cor~5.1]{ChCh88}\cite[Prop.~1.4]{STV92},
which is known to be false \cite{FoPe59}\cite{codetables}.

\entry
A smooth conic $\cQ$ in $\PP^2$ over $\F_2$ contains precisely three rational points $P_1,P_2,P_3$
and one pair of conjugate quadratic points $Q,\overline{Q}$,
no three of which lie on a line.

Blowing-up $\PP^2$ at these five points gives a (nonsingular) del Pezzo surface $X$ of degree $4$ \cite[Th.~24.4]{Manin},
with
\beqv
\abs{X(\F_2)}=13.
\eeqv
More precisely, there are $3$ rational points on each of the exceptional divisors $E_1,E_2,E_3$ above $P_1,P_2,P_3$,
and then $4$ other rational points above the remaining points $P_4,P_5,P_6,P_7$ in $\PP^2(\F_2)$.

The canonical sheaf of $X$ is
\beqv
\cK_X=(\pi^*\cO(-3))(E_1+E_2+E_3+F+\overline{F})
\eeqv
where $\pi:X\to\PP^2$ is the blow-up map, and $F,\overline{F}$ are the exceptional divisors over $Q,\overline{Q}$, respectively.

The anticanonical embedding is projectively normal \cite[Th.~8.3.4]{Dolgachev} and realizes $X$ as an intersection of two quadrics in $\PP^4$ \cite[Th.~8.6.2]{Dolgachev},
thus:
\begin{itemize}
\item $\dim\Gamma(X,\cK_X^{-1})=5$
\item $\dim\Gamma(X,\cK_X^{-2})=13$.
\end{itemize}
Here $\pi_*$ gives an identification
\beqv
\Gamma(X,\cK_X^{-1})=\Gamma(\PP^2,\fI_{P_1,P_2,P_3,Q,\overline{Q}}\cO(3))
\eeqv
of the space of sections of the anticanonical sheaf with the linear system of cubic forms on $\PP^2$ vanishing at $P_1,P_2,P_3,Q,\overline{Q}$
(which, indeed, has dimension $5$, e.g. by Cayley-Bacharach).

Likewise we have an identification
\beqv
\Gamma(X,\cK_X^{-2})=\Gamma(\PP^2,\fI_{P_1,P_2,P_3,Q,\overline{Q}}^2\cO(6))
\eeqv
where the right-hand side is the linear system of sextic forms on $\PP^2$ vanishing at $P_1,P_2,P_3,Q,\overline{Q}$ with multiplicity at least $2$.

\begin{lemma}
\label{dPinj}
The natural restriction map, from $\Gamma(X,\cK_X^{-2})$ to the product of the fibers of $\cK_X^{-2}$ at the $13$ rational points of $X$, is injective.
\end{lemma}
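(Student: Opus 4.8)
The statement is that the evaluation map $\Gamma(X,\cK_X^{-2})\to\prod_{i}(\cK_X^{-2})|_{P_i}$ over the $13$ rational points $P_1,\dots,P_{13}$ of $X$ is injective. Since both sides have dimension $13$, injectivity is equivalent to surjectivity, i.e. to saying that the $13$ points impose independent conditions on $\Gamma(X,\cK_X^{-2})$; this is the form I would actually verify. Equivalently, a section $s\in\Gamma(X,\cK_X^{-2})$ vanishing at all $13$ rational points must be zero. The plan is to transfer the question to $\PP^2$ using the identification $\Gamma(X,\cK_X^{-2})=\Gamma(\PP^2,\fI^2_{P_1,P_2,P_3,Q,\overline Q}\cO(6))$ already recorded in the excerpt, and then to analyze a sextic curve on $\PP^2$ with prescribed singularities and prescribed rational points, using Bézout against low-degree curves.

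\emph{First step: translate the vanishing locus.} The $13$ rational points of $X$ split as the $3$ points $P_4,\dots,P_7$ of $\PP^2(\F_2)$ lying off the five blown-up points (giving $4$ genuine points of $\PP^2$, but one of the $\PP^2(\F_2)$ points is actually among $P_1,P_2,P_3$, so this is the four remaining points of $\PP^2(\F_2)$), together with the $9$ rational points lying on the five exceptional divisors: $3$ on each of $E_1,E_2,E_3$, and none on $F,\overline F$ since those lie over the conjugate quadratic pair. Under $\pi_*$, a section $s$ of $\cK_X^{-2}$ corresponds to a sextic $G$ on $\PP^2$ with a double point at each of $P_1,P_2,P_3,Q,\overline Q$. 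Vanishing of $s$ at the $4$ points of $\PP^2(\F_2)\setminus\{P_1,P_2,P_3\}$ means $G$ vanishes at those $4$ points. Vanishing of $s$ at a rational point of $E_i$ means that the \emph{leading (degree-$2$) form} of $G$ at $P_i$ — the tangent cone, a conic in the $\PP^1$ of directions, i.e. in $\PP(T_{P_i}\PP^2)$ — vanishes at the corresponding rational tangent direction; and similarly one gets conditions in the tangent cones at $Q,\overline Q$, but those directions are not rational so impose nothing beyond what is already forced. So the hypothesis becomes: $G$ is a sextic, double at $P_1,P_2,P_3,Q,\overline Q$, vanishing at the four remaining $\F_2$-points of the plane, and with each tangent cone at $P_i$ vanishing at all three rational directions at $P_i$ — but a conic in a $\PP^1$ with three distinct zeros is identically zero, so the tangent cone at each $P_i$ is $0$, i.e. $G$ has multiplicity $\geq 3$ at each of $P_1,P_2,P_3$.

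\emph{Second step: Bézout.} So I must show: a sextic $G$ on $\PP^2_{\F_2}$ with $\mathrm{mult}\geq 3$ at $P_1,P_2,P_3$, $\mathrm{mult}\geq 2$ at $Q,\overline Q$, and passing through the remaining four rational points, is zero. Consider the smooth conic $\cQ$ through $P_1,P_2,P_3,Q,\overline Q$ (the conic we started from). By Bézout, $G\cdot\cQ=12$, but the forced multiplicities contribute $3+3+3+2+2=13>12$ unless $\cQ\subset G$, so $G=\cQ\cdot G'$ with $G'$ a quartic. Now $G'$ must have $\mathrm{mult}\geq 2$ at $P_1,P_2,P_3$ (the triple point of $G$ minus the simple point of $\cQ$) and $\mathrm{mult}\geq 1$ at $Q,\overline Q$, and pass through the four extra points. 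A quartic with three double points has $\geq 3\cdot 3=9$ conditions against the $\binom{6}{2}=15$-dimensional space, but more efficiently: the line $L_{ij}$ through $P_i,P_j$ meets $G'$ in $4$ points, yet picks up multiplicity $2+2=4$ at $P_i,P_j$, so $L_{ij}\subset G'$ unless... it is contained. So $L_{12}L_{13}L_{23}$ (a cubic) divides $G'$, leaving $G'=L_{12}L_{13}L_{23}\cdot\ell$ for a line $\ell$; then $\ell$ must pass through $Q$ and $\overline Q$, hence $\ell$ is the rational line $\overline{Q\overline Q}$, and must also pass through the four extra rational points — impossible, as a line has at most $3$ points over $\F_2$ and these four points together with the intersections force $\ell=0$. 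The main obstacle, and the place to be careful, is precisely this bookkeeping: confirming that the $\PP^2(\F_2)$-points off the conic really are four in number and in general position relative to the lines $L_{ij}$ and to $\overline{Q\overline Q}$ (so that no degenerate configuration lets a nonzero $\ell$ or $G'$ survive), and checking the edge cases where $\cQ$, the $L_{ij}$, or $\overline{Q\overline Q}$ might share components. Once the configuration is pinned down — using that no three of $P_1,P_2,P_3,Q,\overline Q$ are collinear, which is given — the Bézout cascade forces $G=0$, hence $s=0$, proving injectivity.
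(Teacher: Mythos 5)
Your overall strategy — push $s$ forward to a sextic on $\PP^2$, read off the vanishing conditions at the exceptional divisors to promote the double points at $P_1,P_2,P_3$ to triple points, then run a B\'ezout cascade first against the conic $\cQ$ and then against lines — is exactly the paper's route. But there is a genuine gap at the line-B\'ezout step, and it propagates into a faulty contradiction at the end.

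Concretely: after factoring out $\cQ$, your quartic $G'$ has multiplicity $\geq 2$ at each $P_i$, so a line $L_{ij}=(P_iP_j)$ meets $G'$ with total multiplicity $\geq 2+2=4$. Since $\deg L_{ij}\cdot\deg G'=1\cdot 4=4$, this equality does \emph{not} trigger B\'ezout; you cannot conclude $L_{ij}\subset G'$ from $4\geq 4$, and your ``unless\dots it is contained'' hedges exactly here. What rescues the argument is a specific incidence fact, not ``general position'': over $\F_2$ the line $L=(Q\overline Q)$ is the unique rational line missing $P_1,P_2,P_3$, and its three rational points are precisely $L\cap(P_1P_2)$, $L\cap(P_1P_3)$, $L\cap(P_2P_3)$ — so after relabelling $P_4,P_5,P_6$, the point $P_4$ lies on $(P_1P_2)$, etc. Adding this extra simple zero gives intersection degree $\geq 5 > 4$ on each $L_{ij}$ and on $L$ itself (where $G'$ vanishes at $Q+\overline Q+P_4+P_5+P_6$), and \emph{now} B\'ezout forces all four lines into $G'$, so $G'$ is, up to scalar, the product $L_{12}L_{13}L_{23}L$.

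Your concluding contradiction is then misdirected: since $P_4,P_5,P_6$ already sit on the $L_{ij}$, the residual line $\ell$ need not pass through all four extra points — only through $P_7$. The actual contradiction is that $\ell$, being forced to contain $Q$ and $\overline Q$, must equal $L$, yet the seventh rational point $P_7$ of $\PP^2(\F_2)$ lies on none of $L_{12},L_{13},L_{23},L$; so $G'=L_{12}L_{13}L_{23}L$ cannot vanish at $P_7$, whence $G'=0$ and $s=0$. Note also that what you call ``general position relative to the lines'' is the opposite of what is needed: the argument works precisely because the $P_4,P_5,P_6$ are \emph{not} generic but are forced onto those lines by the arithmetic of $\PP^2(\F_2)$.
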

\begin{proof}
Let $s\in\Gamma(X,\cK_X^{-2})$ vanishing at the $13$ rational points of $X$. We have to show $s=0$.

First, we use the vanishing of $s$ at the $9$ rational points in $E_1,E_2,E_3$.
Since $\cK_X^{-2}=(\pi^*\cO(6))(-2E_1-2E_2-2E_3-2F-2\overline{F})$,
the restriction $s|_{E_i}$ defines a section of
\beqv
\cO(-2E_i)|_{E_i}\simeq\cO_{\PP^1}(2).
\eeqv
The fact that this degree $2$ section vanishes at the $3$ rational points of $E_i\simeq\PP^1$
then forces $s|_{E_i}=0$, so actually $s$ is a section of
\beqv
\cK_X^{-2}(-E_1-E_2-E_3)=(\pi^*\cO(6))(-3E_1-3E_2-3E_3-2F-2\overline{F}).
\eeqv
This means $f=\pi_*s\in\Gamma(\PP^2,\cO(6))$ vanishes with multiplicity at least $3$ at $P_1,P_2,P_3$ and $2$ at $Q,\overline{Q}$.

In turn, the vanishing of $s$ at the remaining $4$ rational points of $X$
means that $f$ vanishes at their images $P_4,P_5,P_6,P_7$ in $\PP^2$.

Recollecting, we have established so far that $f=\pi_*s\in\Gamma(\PP^2,\cO(6))$ is a sextic form that vanishes:
\begin{itemize}
\item at $P_1,P_2,P_3$ with multiplicity at least $3$
\item at $Q,\overline{Q}$ with multiplicity at least $2$
\item at $P_4,P_5,P_6,P_7$ with multiplicity at least $1$.
\end{itemize}
We have to show $f=0$.
\vspace{.5\baselineskip}

On the conic $\cQ=(P_1P_2P_3Q\overline{Q})$, the sextic $f$ vanishes at $3P_1+3P_2+3P_3+2Q+2\overline{Q}$,
a divisor of total degree $13$. By B\'ezout, this forces $f|_{\cQ}=0$, i.e. $f=qt$
where $q\in\Gamma(\PP^2,\cO(2))$ is an equation for $\cQ$,
and $t\in\Gamma(\PP^2,\cO(4))$ is a quartic that has to vanish further:
\begin{itemize}
\item at $P_1,P_2,P_3$ with multiplicity at least $2$
\item at $Q,\overline{Q}$ and $P_4,P_5,P_6,P_7$ with multiplicity at least $1$.
\end{itemize}

Set $L=(Q\overline{Q})$. It is a rational line that does not meet $P_1,P_2,P_3$
(and actually it is the only such line).
Relabelling, we may assume $P_4=L\cap(P_1P_2)$, $P_5=L\cap(P_1P_3)$, $P_6=L\cap(P_2P_3)$. 
\vspace{.5\baselineskip}

Then, the quartic $t$ vanishes at a divisor of degree~$5$ on each of the following lines:
\begin{itemize}
\item at $2P_1+2P_2+P_4$ on $(P_1P_2)$
\item at $2P_1+2P_3+P_5$ on $(P_1P_3)$
\item at $2P_2+2P_3+P_6$ on $(P_2P_3)$
\item at $Q+\overline{Q}+P_4+P_5+P_6$ on $L$.
\end{itemize}
This implies that, up to multiplication by a constant, $t$ is the product of the equations of these four lines.
However, $t$ also vanishes at $P_7$, while none of these lines passes through $P_7$.
Thus the only possibility is $t=0$, hence $f=0$, $s=0$.
\end{proof}

\entry
\label{dPsurj}
Now consider another smooth conic $\cQ'$, that passes through $P_1,P_2$ but not through $P_3,Q,\overline{Q}$.

For instance, setting $L=(Q\overline{Q})$, $P_4=L\cap(P_1P_2)$, $P_5=L\cap(P_1P_3)$,
and letting $Q',\overline{Q'}$ be the conjugate quadratic points of the line $(P_3P_4)$,
we can take $\cQ'=(P_1P_2P_5Q'\overline{Q'})$.

Let then $R_1\in\cQ'(\F_{2^5})$ be a point of degree $5$ on $\cQ'$, non-rational (there are $30$ of them),
and write $R_2,R_3,R_4,R_5$ for its conjugates.

\begin{lemma*}
The natural restriction map $\Gamma(X,\cK_X^{-1})\to\cK_X^{-1}|_{R_1}$ is surjective.
\end{lemma*}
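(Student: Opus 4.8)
The statement to prove is that the restriction map $\Gamma(X,\cK_X^{-1})\to\cK_X^{-1}|_{R_1}$ is surjective, where $R_1$ is a degree-$5$ closed point on the auxiliary conic $\cQ'$. Since $\cK_X^{-1}|_{R_1}$ is a free module of rank $1$ over the residue field $\F_{2^5}$ of $R_1$, which has dimension $5$ over $\F_2$, and since $\dim_{\F_2}\Gamma(X,\cK_X^{-1})=5$ as recorded above, surjectivity is equivalent to \emph{injectivity}, hence to the statement that no nonzero anticanonical section vanishes at $R_1$. So the plan is: assume $s\in\Gamma(X,\cK_X^{-1})$ vanishes at $R_1$ (equivalently, at all five conjugate points $R_1,\dots,R_5$) and show $s=0$.

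\medskip

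\textbf{Key steps.} First I would transport the problem down to $\PP^2$ via $\pi_*$, exactly as in Lemma~\ref{dPinj}: a section $s\in\Gamma(X,\cK_X^{-1})$ corresponds to a cubic form $f=\pi_*s\in\Gamma(\PP^2,\cO(3))$ vanishing at $P_1,P_2,P_3,Q,\overline{Q}$ (simply, since only $\cK_X^{-1}$, not its square, is involved). The conjugate points $R_1,\dots,R_5$ on $\cQ'$ map to a degree-$5$ effective divisor on $\cQ'$; call its image $D'$. So $f$ is a cubic through $P_1,P_2,P_3,Q,\overline{Q}$ that additionally vanishes on $D'$, and I must conclude $f=0$. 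Now restrict $f$ to the conic $\cQ'=(P_1P_2P_5Q'\overline{Q'})$: on $\cQ'\simeq\PP^1$ the cubic $f$ cuts out a divisor of degree $6$, but $f$ already vanishes at $P_1$, $P_2$ (which lie on $\cQ'$ by construction) and at $D'$ of degree $5$ — that is a divisor of degree $\geq 7 > 6$ on $\cQ'$. By B\'ezout this forces $f|_{\cQ'}=0$, i.e.\ $f=q'\ell$ where $q'$ is the conic equation of $\cQ'$ and $\ell$ is a linear form. It remains to see that $\ell=0$: but $f$ still vanishes at $P_3,Q,\overline{Q}$, none of which lies on $\cQ'$ (that was exactly the point of choosing $\cQ'$ in~\ref{dPsurj}), so $q'$ does not vanish there, forcing $\ell$ to vanish at the three non-collinear points $P_3,Q,\overline{Q}$; hence $\ell=0$, so $f=0$ and $s=0$.

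\medskip

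\textbf{Main obstacle.} The genuine content is the B\'ezout count on $\cQ'$, and the only thing that could go wrong is a degenerate degree count: one needs $R_1$ to actually lie on $\cQ'$ (true by construction, $R_1\in\cQ'(\F_{2^5})$) so that its five conjugates give an honest degree-$5$ contribution to $f|_{\cQ'}$, \emph{and} one needs that these five conjugate points, together with $P_1$ and $P_2$, are genuinely distinct points of $\cQ'$ — a degree-$5$ closed point is disjoint from any rational point, so $\{R_i\}\cap\{P_1,P_2\}=\varnothing$ automatically. That gives degree $\geq 7$ against the available $6$, so no slack is needed and the argument is robust. The one point deserving a line of care is the passage from "vanishes at $R_1$" to "vanishes at the divisor $R_1+\dots+R_5$ of degree $5$": this is just that $R_1$ is a single closed point of degree $5$, whose vanishing is a single $\F_{2^5}$-linear condition equivalent to the vanishing of $f$ along the corresponding Galois orbit. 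After that, the reduction to $f=q'\ell$ and the final elimination via $P_3,Q,\overline{Q}\notin\cQ'$ are routine, mirroring the style of Lemma~\ref{dPinj}.
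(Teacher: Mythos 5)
Your proof is correct and takes essentially the same route as the paper: reduce surjectivity to injectivity by the dimension count $5=5$, push a vanishing section down to a cubic on $\PP^2$, count $\geq 7>6$ intersection points with $\cQ'$ (namely $P_1,P_2$ and the degree-$5$ orbit of $R_1$) to force $\cQ'$ as a component via B\'ezout, and then kill the residual linear factor because $P_3,Q,\overline{Q}$ are not collinear. The only cosmetic difference is that the paper phrases the last step as a contradiction (a line $D$ through three non-collinear points) whereas you conclude $\ell=0$ directly; both are the same argument.
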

\begin{proof}
Since both sides have dimension $5$, it suffices to prove that this restriction map is injective.
So consider a section $s\in\Gamma(X,\cK_X^{-1})$ that vanishes at $R_1$. We have to show $s=0$.

Assume the contrary. Then $f=\pi_*s\in\Gamma(\PP^2,\cO(3))$ defines a cubic $\cC\subset\PP^2$
that passes through $P_1,P_2,P_3,Q,\overline{Q}$,
and also through $R_1$ and its conjugates $R_2,R_3,R_4,R_5$, because of Galois invariance.
Then $\cC\cap\cQ'$ contains the $7$ points $P_1,P_2,R_1,R_2,R_3,R_4,R_5$, so by B\'ezout
we have $\cC=\cQ'\cup D$ for a certain line $D$.
Then $D$ has to pass through $P_3,Q,\overline{Q}$, a contradiction since these three points do not lie on a line.
\end{proof}

\begin{proposition}
\label{algo_13_5}
With these notations, $(X,R_1,X(\F_2),\cK_X^{-1},\Gamma(X,\cK_X^{-1}))$ is the geometric realization
of a multiplication algorithm $\F_{2^5}\leadsto\F_2^{13}$.
\end{proposition}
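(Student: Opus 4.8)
The plan is simply to check that the data $(X,R_1,X(\F_2),\cK_X^{-1},\Gamma(X,\cK_X^{-1}))$ meet the two hypotheses of Proposition~\ref{evaluation-interpolation}, with $\cA=\F_{2^5}$ and $\cB=\F_2^{13}$. In the notation there, the point $Q$ is $R_1\in X(\F_{2^5})$, i.e.\ the morphism $\Spec\F_{2^5}\to X$ attached to our degree-$5$ point, while $\cP$ is the collection $X(\F_2)$ of the $13$ rational points, i.e.\ the morphism $\Spec(\F_2^{13})\to X$ whose components are their inclusions. Since $\cB=\F_2^{13}$ carries componentwise multiplication, Construction~\ref{evaluation-interpolation} will then produce a multiplication \emph{algorithm} of length $n=13$. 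We take $\cL=\cK_X^{-1}$ and $V=\Gamma(X,\cK_X^{-1})$, the full anticanonical linear system.

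Condition (i) asks that $V\to\cL|_Q=\cK_X^{-1}|_{R_1}$ be surjective. As $R_1$ has degree $5$, both sides are $\F_2$-vector spaces of dimension $5$, so this is exactly the content of the (unnumbered) Lemma of \S\ref{dPsurj}. In particular the map is bijective, so the last clause of Proposition~\ref{evaluation-interpolation} applies and the resulting $\phi:\F_{2^5}\leadsto\F_2^{13}$ is well defined up to diagonal equivalence.

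For condition (ii) I would first unwind $\cL^{\tens 2}|_{\cP}$: since $\cP$ is the reduced union of $13$ distinct rational points, $\cL^{\tens 2}|_{\cP}=\bigoplus_{P\in X(\F_2)}\cK_X^{-2}|_P$ is precisely the product of the fibres of $\cK_X^{-2}$ at these $13$ points, so the map $V\deux\to\cL^{\tens 2}|_{\cP}$ is the restriction map of Lemma~\ref{dPinj}. It then remains to identify $V\deux$ with the full bi-anticanonical space $\Gamma(X,\cK_X^{-2})$: because the anticanonical embedding of $X$ in $\PP^4$ is projectively normal \cite[Th.~8.3.4]{Dolgachev}, the multiplication map $\Sym^2\Gamma(X,\cK_X^{-1})\to\Gamma(X,\cK_X^{-2})$ is surjective, and its image is by definition $V\deux$; alternatively one checks the dimension count $\dim\Sym^2\F_2^5-2=13=\dim\Gamma(X,\cK_X^{-2})$ using that $X$ is cut out by two quadrics in $\PP^4$ \cite[Th.~8.6.2]{Dolgachev}. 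With this identification, (ii) is exactly the injectivity statement of Lemma~\ref{dPinj}. Both hypotheses being satisfied, Construction~\ref{evaluation-interpolation} yields the asserted geometric realization. The only step that is not pure bookkeeping is the equality $V\deux=\Gamma(X,\cK_X^{-2})$, and even that is immediate from projective normality; I do not anticipate any genuine obstacle, since Lemmas~\ref{dPinj} and~\ref{dPsurj} were crafted precisely to supply conditions (ii) and (i).
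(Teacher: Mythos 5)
Your proposal is correct and takes essentially the same route as the paper: invoke Proposition~\ref{evaluation-interpolation} with $Q=R_1$, $\cP=X(\F_2)$, $\cL=\cK_X^{-1}$, $V=\Gamma(X,\cK_X^{-1})$, and check conditions (i) and (ii) via the Lemmas of \ref{dPsurj} and \ref{dPinj} respectively. One small remark on your handling of (ii): the identification $V\deux=\Gamma(X,\cK_X^{-2})$ via projective normality is not actually required for injectivity, since $V\deux$ is by definition a subspace of $\Gamma(X,\cK_X^{-2})$ and the restriction of an injective map to a subspace stays injective; you have quietly turned a nice-to-have into an apparent prerequisite. That said, the projective normality observation is correct and is what justifies the paper's parenthetical claim that the evaluation map in (ii) is in fact \emph{bijective}, so including it does no harm and even fills in a detail the paper leaves implicit.
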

\begin{proof}
The surjectivity condition (i) in Proposition~\ref{evaluation-interpolation} is satisfied by~\ref{dPsurj},
and the injectivity condition (ii), by~\ref{dPinj};
actually, both evaluation maps in (i) and (ii) turn out to be bijective here.
\end{proof}

\entry
All this can be made explicit.
Set $\F_4=\F_2[\alpha]/(\alpha^2+\alpha+1)$ and $\F_{2^5}=\F_2[\gamma]/(\gamma^5+\gamma^2+1)$. Let $x,y,z$ be linear coordinates on $\PP^2$.

Take $P_1=(1:0:0)$, $P_2=(0:1:0)$, $P_3=(0:0:1)$, $L=\{x+y+z=0\}$, $Q=(1:\alpha:\alpha^2)$, $\overline{Q}=(1:\alpha^2:\alpha)$, $\cQ=\{xy+xz+yz=0\}$.

Then $P_4=(1:1:0)$, $P_5=(1:0:1)$, $(P_3P_4)=\{x=y\}$, $Q'=(1:1:\alpha)$, $\overline{Q'}=(1:1:\alpha^2)$, $\cQ'=\{xy+xz+z^2=0\}$, and we can take $R_1=(\gamma^2:\gamma^3:1)$.

As a basis for $V=\Gamma(X,\cK_X^{-1})=\Gamma(\PP^2,\fI_{P_1,P_2,P_3,Q,\overline{Q}}\cO(3))$
we take:
\begin{align*}
v_0&=x(xy+xz+yz)\\
v_1&=y(xy+xz+yz)\\
v_2&=xy(x+y+z)+x(xy+z^2)\\
v_3&=(x+z)(xy+xz+yz)\\
v_4&=x(xy+z^2).
\end{align*}
Indeed, these cubics are easily seen to vanish at $P_1,P_2,P_3,Q,\overline{Q}$,
and were chosen through linear algebra so as to evaluate at $R_1$ as
\beqv
v_i(R_1)=\gamma^i
\eeqv
for $i=0,\dots,4$.

Now we evaluate these $v_i$ at the $13$ points in $X(\F_2)$ to get $\phi$.
Evaluation at $P_4,P_5,P_6,P_7$ is straightforward,
but how to evaluate $v_i$ at the points that lie in an exceptional divisor, say, at the three points in $E_1$ above $P_1$?
These values are given by the derivatives $(\partial_yv_i)(P_1)$, $(\partial_zv_i)(P_1)$, and $((\partial_y+\partial_z)v_i)(P_1)$.

Doing so, we find that $\phi:\F_{2^5}\leadsto\F_2^{13}$
is given by the matrix
\beqv
\left(
\begin{array}{ccccccccccccc}
1 & 1 & 0 & 0 & 0 & 0 & 0 & 0 & 0 & 1 & 1 & 0 & 1\\
0 & 0 & 0 & 1 & 1 & 0 & 0 & 0 & 0 & 1 & 0 & 1 & 1\\
1 & 1 & 0 & 1 & 0 & 1 & 0 & 1 & 1 & 0 & 1 & 1 & 1\\
1 & 1 & 0 & 0 & 0 & 0 & 1 & 1 & 0 & 1 & 0 & 1 & 0\\
1 & 0 & 1 & 0 & 0 & 0 & 1 & 0 & 1 & 1 & 1 & 0 & 0
\end{array}
\right)
\eeqv
relative to the basis $1,\gamma,\gamma^2,\gamma^3,\gamma^4$ of $\F_{2^5}$ (with row vector notation).

As observed in~\ref{adjoint}, the adjoint $\omega:\F_2^{13}\to\F_{2^5}$ of $\phi$ is then uniquely determined,
by mere linear algebra (i.e. we need not follow the proof of Proposition~\ref{evaluation-interpolation}).
We find it is given by (the transpose of) the matrix
\beqv
\left(
\begin{array}{ccccccccccccc}
1 & 0 & 0 & 0 & 1 & 1 & 0 & 1 & 1 & 1 & 0 & 1 & 1\\
1 & 1 & 0 & 1 & 0 & 0 & 1 & 0 & 1 & 0 & 1 & 0 & 1\\
0 & 1 & 0 & 0 & 0 & 0 & 0 & 1 & 1 & 1 & 1 & 1 & 1\\
1 & 0 & 1 & 1 & 1 & 1 & 1 & 1 & 1 & 0 & 1 & 0 & 0\\
1 & 1 & 0 & 1 & 0 & 1 & 0 & 1 & 1 & 0 & 0 & 1 & 0
\end{array}
\right)
\eeqv
so $\omega_1=1+\gamma+\gamma^3+\gamma^4$, $\,\omega_2=\gamma+\gamma^2+\gamma^4$, $\,\omega_3=\gamma^3$, etc. 

\begin{remark}
The space of quadratic relations between $v_0,v_1,v_2,v_3,v_4$
is generated by
\beqv
v_0^2+v_0v_2+v_1v_4
\eeqv
and
\beqv
v_0(v_0+v_1+v_3+v_4)+v_3(v_2+v_3+v_4).
\eeqv
Thus, now seeing $v_0,v_1,v_2,v_3,v_4$ as indeterminates that define linear coordinates in $\PP^4$,
we find that $X\subset\PP^4$ is the surface cut out by these two quadrics.
\end{remark}

\begin{remark}
The binary linear code $C_\phi$ image of $\phi$
has parameters $[13,5,5]$, which is optimal.
Its automorphism group has size $48$;
it acts transitively on the first $12$ coordinates,
but fixes the last coordinate (which corresponds to evaluation at $P_7=(1:1:1)$).

The square code $C_\phi\deux$ is the whole of $\F_2^{13}$.
\end{remark}

\section{The canonical point}

\entry
Following the Russian school of coding theory, any (non-degenerate) linear code can be
obtained by evaluation of linear functions on a system of rational points in a projective
space. Here we consider what happens when we apply this point of view to multiplication algorithms,
or more generally to multiplication reductions.

If $V$ is a finite dimensional $\FF$-vector space, recall that $\PP V=\Proj S^\cdot V$
is the (``dual'') projective space that parameterizes invertible quotients of $V$.
More precisely, for any algebra $\cR$, there is a natural bijection between
the set $\PP V(\cR)$ of points of $\PP V$ with coordinates in $\cR$,
and the set of equivalence classes of surjective maps
\beqv
q:V\tens_{\FF}\cR\surj L
\eeqv
with $L$ a locally free $\cR$-module of rank~$1$.

In coordinates, the choice of a basis $v_1,\dots,v_k$ of $V$ identifies
the symmetric algebra $S^\cdot V$ with the polynomial algebra $\FF[x_1,\dots,x_k]$
(where $v_i\leftrightarrow x_i$),
hence it identifies $\PP V$ with the standard projective space $\PP^{k-1}$.
If moreover $\cR$ is finite over $\FF$, then $L$ is actually free.
Choosing a trivialization $L\simeq\cR$, we see that the quotient map $q:V\tens_{\FF}\cR\surj L\simeq\cR$
then corresponds to the point
\beqv
(q(v_1\tens 1):\dots:q(v_k\tens 1))
\eeqv
in $\PP^{k-1}(\cR)$.

\entry
Let now $\cA$ be an algebra, finite over $\FF$.
We apply what precedes with $V=\cR=\cA$ and the multiplication map
\beqv
m_{\cA}:\cA\tens_{\FF}\cA\surj\cA.
\eeqv
\begin{definition*}
The \emph{canonical point} of $\cA$ is the point
\beqv
Q_{\cA}\in\PP\cA(\cA)
\eeqv
corresponding to $m_{\cA}$.
\end{definition*}
In coordinates, if we choose a basis $a_1,\dots,a_k$ of $\cA$,
identifying $\PP\cA$ with $\PP^{k-1}$ accordingly, then
\beqv
Q_{\cA}=(a_1:\dots:a_k)\in\PP^{k-1}(\cA).
\eeqv

\entry
\label{PC}
We say that a $\cB$-code $C\subset\cB$ is \emph{non-degenerate} if it generates the unit ideal in $\cB$:
\beqv
\cB\cdot C=\cB.
\eeqv
(For instance, if $C\subset\FF^n$ is a linear code in the usual sense, it means $C$ has full support.)

If $C$ is non-degenerate, then the surjective map
\beqv
m_{\cB}:\;C\tens\cB\,\surj\,C\cdot\cB=\cB
\eeqv
defines a point
\beqv
\cP_C\in\PP C(\cB),
\eeqv
which we call the \emph{projective system} of $C$.

In coordinates, let $c_1,\dots,c_k$ be a basis of $C$, and identify $\PP C$ with $\PP^{k-1}$ accordingly.
Then
\beqv
\cP_C=(c_1:\dots:c_k)\in\PP^{k-1}(\cB).
\eeqv
Furthermore, write $\cB=\cB_1\times\cdots\cB_m$ as a product of local algebras, and for $1\leq j\leq m$
let $\pi_j:\cB\to\cB_j$ be the $j$-th projection.
This allows to define the generator matrix of $C$ (with respect to the given basis)
as the $k\times m$ matrix $\GG$ whose $(i,j)$-th entry is $\pi_j(c_i)\in\cB_j$.
Then we can view $\cP_C$ as an ordered collection
\beqv
\cP_C=(P_1,\dots,P_m)
\eeqv
where
\beqv
P_j=(\pi_j(c_1):\dots:\pi_j(c_k))\in\PP^{k-1}(\cB_j)
\eeqv
is the point defined by the $j$-th column of $\GG$.
(For instance, if $\cB=\FF^n$, then $\cP_C$ is the projective system of rational points associated to $C$
in the usual sense.)

Then we retrieve $C$ as the image of the evaluation-at-$\cP_C$ map
\beqv
\Gamma(\PP^{k-1},\cO(1))\longto\cB
\eeqv
which sends a linear function $l\in\Gamma(\PP^{k-1},\cO(1))=\FF[x_1,\dots,x_k]_1$ to its value $l(c_1,\dots,c_k)\in\cB$,
or equivalently to $(l(P_1),\dots,l(P_m))\in\cB_1\times\cdots\cB_m$.

\entry
\label{Pphi}
We also say a linear map $\phi:\cA\to\cB$ is non-degenerate if it is injective
and its image $C_\phi=\phi(\cA)\subset\cB$ is a non-degenerate $\cB$-code.
For instance:
\begin{itemize}
\item If $\phi:\cA\leadsto\FF^n$ is a multiplication algorithm,
it means its components $\phi_1,\dots,\phi_n\in\cA^\vee$ are all nonzero.
We can always assume this holds: otherwise, just discard the bad coordinates to get
a shorter, non-degenerate algorithm.
\item
A reverse multiplication-friendly embedding $\phi:\F_q^k\leadsto\F_{q^n}$ is always non-degenerate,
since it is nonzero and its image lives in $\F_{q^n}$, a field.
\item
For a general multiplication reduction $\phi:\cA\leadsto\cB$, write $\cB=\cB_1\times\cdots\times\cB_m$ as
a product of local algebras, and let $\phi_j:\cA\to\cB_j$ be the components of $\phi$.
Then $\phi$ non-degenerate means for all $j$, the image $\phi_j(\cA)$ contains an invertible
element in $\cB_j^\times$.
\end{itemize}
Since $\phi$ is injective, it identifies $\cA$ with its image: $\cA\simeq C_{\phi}$,
so it identifies the corresponding projective spaces: $\PP\cA\simeq\PP C_{\phi}$.
We then let
\beqv
\cP_\phi\in\PP\cA(\cB)
\eeqv
correspond to $\cP_{C_\phi}\in\PP C_\phi(\cB)$ under this identification.

Then we retrieve $\phi$ as the evaluation-at-$\cP_\phi$ map
\beqv
\cA=\Gamma(\PP\cA,\cO(1))\;\longto\;\cB.
\eeqv
We also call $\cP_\phi$ the projective system of $\phi$.

In coordinates, let $a_1,\dots,a_k$ be a basis of $\cA$, and identify $\PP\cA$ with $\PP^{k-1}$ accordingly.
Then
\beqv
\cP_\phi=(\phi(a_1):\dots:\phi(a_k))\in\PP^{k-1}(\cB).
\eeqv
Furthermore, write $\cB=\cB_1\times\cdots\cB_m$ as a product of local algebras, and for $1\leq j\leq m$
let $\phi_j:\cA\to\cB_j$ be the $j$-th component of $\phi$.
Then we can view $\cP_\phi$ as an ordered collection
\beqv
\cP_\phi=(P_1,\dots,P_m)
\eeqv
where
\beqv
P_j=(\phi_j(a_1):\dots:\phi_j(a_k))\in\PP^{k-1}(\cB_j).
\eeqv

\entry
\label{almost}
Let $\phi:\cA\leadsto\cB$ be a non-degenerate multiplication reduction.
Let $Q_{\cA}\in\PP\cA(\cA)$ be the canonical point of $\cA$,
and let $\cP_\phi\in\PP\cA(\cB)$ as introduced just above.
Also consider the complete linear system $V=\Gamma(\PP\cA,\cO(1))=\cA$.

Then,
\beqv
(\PP\cA,Q_{\cA},\cP_\phi,\cO(1),V)
\eeqv
is ``almost'' a geometric realization
for $\phi$.

By this, we mean that $\phi$ indeed results from the construction given in Proposition~\ref{evaluation-interpolation} applied to this geometric data.
This is because evaluation of $V=\cA$ at $Q_{\cA}$ is the identity map of $\cA$,
and evaluation of $V=\cA$ at $\cP_\phi$ is $\phi$.

However, to have a geometric realization proper, one should check conditions (i) and (ii).
Condition (i) is clearly satisfied, but turning to condition (ii),
for the complete linear system $V=\Gamma(\PP\cA,\cO(1))$
we have $V\deux=\Gamma(\PP\cA,\cO(2))=S^2\cA$, which might fail to evaluate
injectively at $\cP_\phi$.

More precisely, the defect of injectivity is measured by the kernel of the
map
\beqv
S^2\cA\overset{S^2\phi}{\simeq}S^2C_\phi\overset{m_{\cB}\,}{\longto}C_\phi\deux\subset\cB,
\eeqv
which will be studied below.

\section{The quadratic hull}

\entry
\label{I2f}
As explained in the Introduction, the quadratic kernel $I_2(C)$ of a code
was first considered in \cite[sec.~3]{MMP11}, and also later in \cite[\S1.11]{HR-AGCT14}.

In full generality, if $f:Y\to\PP$ is a morphism from a scheme to a projective space over $\FF$,
we define $I_2(f)$ as the degree $2$ part of the homogeneous ideal of the (schematic) image of $f$.
Equivalently:
\beqv
I_2(f)=\ker(\Gamma(\PP,\cO(2))\overset{f^*}{\longto}\Gamma(Y,f^*\cO(2))).
\eeqv
And then:
\begin{definition}
The quadratic hull of $f$ is the closed subscheme
\beqv
Z_2(f)=Z(I_2(f))\subset\PP
\eeqv
defined by $I_2(C)$.

Equivalently, $Z_2(f)$ is the (schematic) intersection of the quadratic hypersurfaces through which $f$ factorizes.

It is the smallest closed subscheme of $\PP$ that contains the (schematic) image of $f$ and is defined by quadratic equations. 
\end{definition}

\entry 
If $C\subset\cB$ is a non-degenerate $\cB$-code,
we set $I_2(C)=I_2(\cP_C)$ and define the quadratic hull of $C$ as
\beqv
Z_2(C)=Z_2(\cP_C)\subset\PP C,
\eeqv
where $\cP_C$ from~\ref{PC} is seen as a morphism $\cP_C:\Spec\cB\to\PP C$.

In this situation, \ref{I2f} specializes to
\beqv
\begin{split}
I_2(C)&=\ker(S^2C\overset{m_{\cB}\,}{\longto}\cB)\\
&=\ker(S^2C\overset{m_{\cB}\,}{\longto}C\deux).
\end{split}
\eeqv
And thus, the quadratic hull of $C$ is the closed subscheme
\beqv
Z_2(C)=Z(I_2(C))\subset\PP C
\eeqv
defined by this $I_2(C)$.

In coordinates, as in~\ref{PC}, let $c_1,\dots,c_k$ be a basis of $C$,
write $\cB=\cB_1\times\cdots\cB_m$ as a product of local algebras, let $\GG$ be the corresponding $k\times m$
generator matrix for $C$, and for $1\leq j\leq m$ let $P_j\in\PP^{k-1}(\cB_j)$ be the point defined by the $j$-th column of $\GG$.
Then $C\deux$ is the image of the evaluation-at-$\cP_C$ map
\beqv
\Gamma(\PP^{k-1},\cO(2))\longto\cB
\eeqv
which sends a homogeneous quadratic form $q\in\Gamma(\PP^{k-1},\cO(2))=\FF[x_1,\dots,x_k]_2$ to its value $q(c_1,\dots,c_k)=(q(P_1),\dots,q(P_m))$,
and $I_2(C)$ is the kernel of this evaluation map:
\beqv
\begin{split}
I_2(C)&=\{q\in\FF[x_1,\dots,x_k]_2:\;q(c_1,\dots,c_k)=0\}\\
&=\{q\in\FF[x_1,\dots,x_k]_2:\;q(P_1)=0,\dots,q(P_m)=0\}.
\end{split}
\eeqv
And then $Z_2(C)=Z_2(\{P_1,\dots,P_m\})$ is the intersection of all quadrics in $\PP^{k-1}$ that contain $P_1,\dots,P_m$.
This intersection should be understood scheme-theoretically,
so in general $Z_2(C)$ need not be a nice smooth, irreducible, nor even reduced variety.

\begin{example}
Let $\FF$ be a field in which $2$ admits two square roots (e.g. $\FF=\F_7$, $\pm\sqrt{2}=\pm3$).
Let $C\subset\FF^7$ be the linear code with generator matrix
\beqv
\GG=\left(
\begin{array}{ccccccc}
\makebox[5mm]{1} & \makebox[5mm]{1} & \makebox[5mm]{1} & \makebox[5mm]{1} & \makebox[5mm]{1} & \makebox[5mm]{1} & \makebox[5mm]{1}\\
0 & 1 & 1 & 0 & 0 & 1 & 1\\
0 & 0 & 0 & 1 & 1 & 1 & 1\\
0 & 1 & -1 & 1 & -1 & \!\!\sqrt{2}\!\! & \!\!-\sqrt{2}\!\!
\end{array}
\right)
\eeqv
and let $P_1,\dots,P_7\in\PP^3(\FF)$ be the points defined by the columns of $\GG$.

If $w,x,y,z$ are linear coordinates on $\PP^3$, then by linear algebra we find
\beqv
I_2(C)=\linspan{x^2-xw,\,y^2-yw,\,x^2+y^2-z^2}.
\eeqv
Set-theoretically, there is no nonzero solution for $w=0$,
while for $w=1$ the first two relations force $x,y\in\{0,1\}$.
From this one easily derives that $Z_2(C)$ has dimension $0$ and $P_1,\dots,P_7$ are its only points.

However, by B\'ezout, $Z_2(C)$ should have length $8$.
It follows that $Z_2(C)$ is \emph{not} reduced.

Closer inspection shows that, indeed, $P_1$ has multiplicity $2$ in $Z_2(C)$ along the $z$-direction,
while its other points $P_2,\dots,P_7$ are simple.

We might write this informally as $Z_2(\{P_1,\dots,P_7\})=\{P_1+\vec{z},P_2,\dots,P_7\}$.
\end{example}

\entry
\label{dimC2}
The quadratic hull $Z_2(C)$ (or equivalently, its ideal $I_2(C)$) contains a lot of information about the square code $C\deux$.
First, by construction, we have
\beqv
\dim C\deux=\binom{k+1}{2}-\dim I_2(C)
\eeqv
where $k=\dim C$.

Further important properties will be given in section~\ref{further} at the end of this work.
These properties are not needed in the application to geometric realizations of multiplication reductions,
but they illustrate the significance of the quadratic hull.

\section{Application to geometric realizations}
\label{appl}

\entry\label{mainth}
We finish the discussion started in~\ref{almost}.
If $\phi:\cA\to\cB$ is a non-degenerate linear map
with projective system $\cP_\phi\in\PP\cA(\cB)$,
we set $I_2(\phi)=I_2(\cP_\phi)$.
Recall that $\phi$ induces an identification $\cA\simeq C_{\phi}$, $S^\cdot\cA\simeq S^\cdot C_{\phi}$, 
hence $\PP\cA\simeq\PP C_{\phi}$, under which $\cP_\phi$ corresponds to $\cP_{C_\phi}$.
It follows
\beqv
\vspace{.2\baselineskip}
\begin{split}
I_2(\phi)&=(S^2\phi)^{-1}I_2(C_\phi)\\
&=\ker(S^2\cA\overset{S^2\phi}{\simeq}S^2C_\phi\overset{m_{\cB}\,}{\longto}C_\phi)\\
&=\ker(S^2\cA\overset{S^2\phi\,}{\longto}S^2\cB\overset{m_{\cB}\,}{\longto}\cB).
\end{split}
\eeqv
We then define the quadratic hull of $\phi$ as
\beqv
Z_2(\phi)=Z_2(\cP_\phi)=Z(I_2(\phi))\subset\PP\cA,
\eeqv
which is the isomorphic image of $Z_2(C_\phi)\subset\PP C_{\phi}$ under the identification above.

\begin{theorem*}
Let $\phi:\cA\to\cB$ be a non-degenerate linear map.
Let $Q_{\cA}\in\PP\cA(\cA)$ be the canonical point of $\cA$,
and let $\cP_\phi\in\PP\cA(\cB)$ be the projective system of $\phi$,
and $Z_2(\phi)\subset\PP\cA$ its quadratic hull.

Then $\phi$ is a multiplication reduction $\cA\leadsto\cB$ if and only if $Z_2(\phi)$ contains $Q_{\cA}$.

Moreover, if this holds, then
\beqv
(Z_2(\phi),Q_{\cA},\cP_\phi,\cO(1)|_{Z_2(\phi)},V)
\eeqv
is a geometric realization of $\phi$,
where the linear system $V\subset\Gamma(Z_2(\phi),\cO(1)|_{Z_2(\phi)})$ is the isomorphic image of $\cA=\Gamma(\PP\cA,\cO(1))$
under the restriction-to-$Z_2(\phi)$ map.
\end{theorem*}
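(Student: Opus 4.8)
The first task is to unwind what "$Q_{\cA}\in Z_2(\phi)$" means. By definition, $Z_2(\phi) = Z(I_2(\phi))$, so containing $Q_{\cA}$ means every $q \in I_2(\phi) \subset S^2\cA$ vanishes at $Q_{\cA}$. Since $Q_{\cA}$ is the canonical point corresponding to $m_{\cA}: \cA \otimes \cA \to \cA$, evaluation of a quadratic form $q \in S^2\cA$ at $Q_{\cA}$ is precisely the image of $q$ under $S^2\cA \xrightarrow{m_{\cA}} \cA$ (up to the trivialization $\cK^{\otimes 2}|_{Q_\cA}\simeq\cA$, which is automatic for $\cL=\cO(1)$ and $\cA$ finite). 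So "$Q_{\cA}\in Z_2(\phi)$" is equivalent to the inclusion $I_2(\phi) \subset \ker(m_{\cA}: S^2\cA \to \cA)$. On the other hand, from~\ref{mainth} we have $I_2(\phi) = \ker(S^2\cA \xrightarrow{S^2\phi} S^2\cB \xrightarrow{m_{\cB}} \cB)$. Meanwhile $\ker(m_{\cA}: S^2\cA\to\cA)$ is exactly $I_2(\id_\cA)$ — the quadratic relations satisfied by the canonical point itself. So the condition reads: every quadratic relation among the $\phi(a_i)$ in $\cB$ is already a quadratic relation among the $a_i$ in $\cA$. That is the content of Remark~\ref{explain} referenced in the Introduction, and it is the natural reformulation to work with.

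Next I would prove the equivalence. For the "only if" direction: if $\phi$ is a multiplication reduction with adjoint $\omega: C_\phi\deux \to \cA$, then for any $q = \sum_i a_i \odot a_i' \in S^2\cA$ with $m_{\cB}(S^2\phi)(q) = \sum_i \phi(a_i)\phi(a_i') = 0$ in $\cB$, applying $\omega$ gives $\sum_i a_i a_i' = \omega(0) = 0$, i.e. $m_{\cA}(q) = 0$; hence $q \in \ker m_{\cA}$, which is the desired inclusion $I_2(\phi)\subset\ker m_\cA$, i.e. $Q_\cA\in Z_2(\phi)$. For the "if" direction, assume $I_2(\phi) \subset \ker(m_{\cA})$. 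Then $m_{\cA}: S^2\cA \to \cA$ factors through $S^2\cA / I_2(\phi) \cong \operatorname{im}(m_{\cB}\circ S^2\phi) = C_\phi\deux$, producing a well-defined linear map $\omega: C_\phi\deux \to \cA$ with $\omega(\phi(a)\phi(a')) = aa'$ for all $a,a'$; extend $\omega$ arbitrarily to $\cB$. By the reformulation in~\ref{adjoint}, this exhibits $\phi$ as a multiplication reduction. I expect this equivalence to be the essentially formal core of the argument.

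The remaining task — the "moreover" — is to check that $(Z_2(\phi), Q_{\cA}, \cP_\phi, \cO(1)|_{Z_2(\phi)}, V)$ satisfies the hypotheses of Proposition~\ref{evaluation-interpolation} and that the resulting reduction is $\phi$. Write $Z = Z_2(\phi)$, $i: Z \hookrightarrow \PP\cA$ the inclusion. Both $Q_\cA$ and $\cP_\phi$ factor through $Z$: for $Q_\cA$ this is exactly the hypothesis just established ($Q_\cA$ is a point of $Z$); for $\cP_\phi$ it holds because by construction $Z_2(\phi)$ contains the schematic image of $\cP_\phi: \Spec\cB \to \PP\cA$ (that is what $Z_2$ of a morphism \emph{means}, by the definition in~\ref{I2f}). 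So we have $Q_\cA \in Z(\cA)$ and $\cP_\phi \in Z(\cB)$. The linear system $V \subset \Gamma(Z, \cO(1)|_Z)$ is the image of the surjection $\cA = \Gamma(\PP\cA, \cO(1)) \twoheadrightarrow \Gamma(Z, \cO(1)|_Z)$; since the $a_i$ form a basis and their images in $\Gamma(Z,\cO(1)|_Z)$ generate $V$, restriction-then-evaluation-at-$Q_\cA$ on $V$ is just evaluation of $\cA$ at $Q_\cA$, namely the identity $\cA \to \cA$, which is bijective — so condition (i) holds (indeed bijectively, which also gives the uniqueness clause of Prop.~\ref{evaluation-interpolation}). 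For condition (ii), $V\deux \subset \Gamma(Z, \cO(2)|_Z)$ is the image of $S^2\cA = \Gamma(\PP\cA,\cO(2)) \to \Gamma(Z, \cO(2)|_Z)$, and by the defining property of $Z$ the kernel of this restriction map is exactly $I_2(\phi)$; hence $V\deux \cong S^2\cA / I_2(\phi)$, and this maps to $\cO(2)|_{\cP_\phi}\simeq\cB$ via $m_\cB\circ S^2\phi$ factored through its kernel $I_2(\phi)$ — which is injective by construction. So (ii) holds. Finally, evaluation of $\cA = V$ at $\cP_\phi$ is $\phi$ by the very definition of $\cP_\phi$ in~\ref{Pphi}, so the construction of Proposition~\ref{evaluation-interpolation} applied to this data returns $\phi$ (up to the diagonal equivalence, which is trivial here since (i) is bijective).

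The main obstacle, I expect, is not any single hard step but the bookkeeping of trivializations: one must check that the identification $\cO(2)|_{\cdot} \simeq \cdot$ obtained "by passing to the square" of the chosen trivialization of $\cO(1)|_{\cdot}$ is the one that makes "evaluation of a quadratic form at $Q_\cA$" literally equal to "$m_\cA$ applied to that form in $S^2\cA$", and similarly at $\cP_\phi$ with $m_\cB\circ S^2\phi$. This compatibility is what makes the kernel-chasing above legitimate, and it is exactly the "evaluation commutes with multiplication" observation from the proof of Proposition~\ref{evaluation-interpolation}; once it is in place, everything else is the formal unwinding sketched above.
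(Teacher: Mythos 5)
Your proposal is correct and follows essentially the same route as the paper: both identify $I_2(\phi)=\ker(m_{\cB}\circ S^2\phi)$ and $I_2(Q_{\cA})=\ker m_{\cA}$, characterize $Q_{\cA}\in Z_2(\phi)$ as the inclusion $I_2(\phi)\subset\ker m_{\cA}$, observe this is exactly when $m_{\cA}$ factors through $C_\phi\deux$ to produce the adjoint $\omega$, and then verify conditions (i) and (ii) of Proposition~\ref{evaluation-interpolation} by factoring the evaluation maps $\ev_{Q_{\cA}}$, $\ev_{\cP_\phi}$ (in degrees $1$ and $2$) through the restriction to $Z_2(\phi)$. The only presentational difference is that the paper compresses the first half into a single ``elementary linear algebra'' assertion while you spell out the two directions and the $V\deux\simeq S^2\cA/I_2(\phi)$ identification explicitly; the mathematical content is identical.
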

\begin{proof}
By elementary linear algebra, a necessary and sufficient condition for the existence of an $\omega:\cB\to\cA$ that completes the diagram
\beqv
\begin{CD}
S^2\cA  @>m_{\cA}>>\cA\\
@V{S^2\phi}VV @.\\
S^2\cB @>m_{\cB}>> \cB
\end{CD}
\eeqv
is that $\ker(S^2\cA\overset{S^2\phi\,}{\longto}S^2\cB\overset{m_{\cB}\,}{\longto}\cB)=I_2(\phi)$
is contained in $\ker(S^2\cA\overset{m_{\cA}\,}{\longto}\cA)=I_2(Q_{\cA})$.
This means precisely that $Q_{\cA}$ lies in $Z_2(\phi)$.

Now observe that the map $\cA=\Gamma(\PP\cA,\cO(1))\overset{\ev_{Q_{\cA}}}{\longto}\cA$ is the identity of $\cA$,
and factorizes through $V$ as
\beqv
\begin{CD}
\cA=\Gamma(\PP\cA,\cO(1)) @>{\res_{Z_2(\phi)}}>> V\subset\Gamma(Z_2(\phi),\cO(1)|_{Z_2(\phi)}) @>{\ev_{Q_{\cA}}}>> \cA,
\end{CD}
\eeqv
which shows that condition (i) in Proposition~\ref{evaluation-interpolation} is satisfied (moreover, with bijectivity).

Likewise the map $\cA=\Gamma(\PP\cA,\cO(1))\overset{\ev_{\cP_\phi}}{\longto}\cB$ is $\phi$,
and factorizes through $V$ as
\beqv
\begin{CD}
\cA=\Gamma(\PP\cA,\cO(1)) @>{\res_{Z_2(\phi)}}>> V\subset\Gamma(Z_2(\phi),\cO(1)|_{Z_2(\phi)}) @>{\ev_{\cP_\phi}}>> \cB,
\end{CD}
\eeqv
which shows that if condition (ii) is satisfied, then $(Z_2(\phi),Q_{\cA},\cP_\phi,\cO(1)|_{Z_2(\phi)},V)$
is a geometric realization for $\phi$.

In order to check (ii), we observe that the map $S^2\cA=\Gamma(\PP\cA,\cO(2))\overset{\ev_{\cP_\phi}}{\longto}\cB$
has kernel $I_2(\phi)$, and factorizes through $V\deux$ as
\beqv
\begin{CD}
S^2\cA=\Gamma(\PP\cA,\cO(2)) @>{\res_{Z_2(\phi)}}>> V\deux\subset\Gamma(Z_2(\phi),\cO(2)|_{Z_2(\phi)}) @>{\ev_{\cP_\phi}}>> \cB.
\end{CD}
\eeqv
But by definition $I_2(\phi)$ is also the kernel of $\Gamma(\PP\cA,\cO(2))\longto V\deux$.
Passing to the quotient we find that $V\deux\overset{\ev_{\cP_\phi}}{\longto}\cB$ is injective, as desired.
\end{proof}

\begin{remark}\label{explain}
The first part of the Theorem (together with its proof) is easily rephrased in elementary terms.

We are given a linear map $\phi:\cA\to\cB$, and we ask whether it is a multiplication algorithm, i.e. whether
it admits an adjoint $\omega:C_\phi\deux\to\cA$.

Choose a basis $a_1,\dots,a_k$ of $\cA$, and let $c_1=\phi(a_1)$, ..., $c_k=\phi(a_k)$ be the corresponding basis of $C$.

Any $y\in C_\phi$ is of the form $y=q(c_1,\dots,c_k)$, for $q\in\FF[x_1,\dots,x_k]_2$ a quadratic form in $k$ variables.
Then necessarily one should have $\omega(y)=q(a_1,\dots,a_k)$. As in \ref{adjoint} this shows the \emph{unicity} of $\omega$,
if it exists.

But now we're interested in the \emph{existence}. For this we observe that $y\in C_\phi$ can possibly be represented
by several quadratic forms. Then $\omega(y)$ will be well defined precisely when, for any such multiple
representation $y=q_1(c_1,\dots,c_k)=q_2(c_1,\dots,c_k)$, the corresponding values $q_1(a_1,\dots,a_k)=q_2(a_1,\dots,a_k)$
coincide. Setting $q=q_1-q_2$, this is equivalent to:

\emph{Any quadratic form $q\in\FF[x_1,\dots,x_k]_2$ vanishing at $(c_1,\dots,c_k)$ also vanishes at $(a_1,\dots,a_k)$.}

This exactly means that $Z_2(\phi)$ contains $Q_{\cA}$.
\end{remark}

\begin{corollary}\label{charsupercode}
Let $C$ be a $[\cB,k]$-code, where $k=\dim\cA$.
Then $C$ is the image of a supercode $\widehat{C}\subset\cA\times\cB$
under the second projection $\pi_{\cB}$
if and only if its quadratic hull $Z_2(C)\subset\PP C$
contains a point $Q\in Z_2(C)(\cA)$ with coordinates in $\cA$
that does not lie in a rational hyperplane of $\PP C$.
\end{corollary}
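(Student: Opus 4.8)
The plan is to reduce the statement to Theorem~\ref{mainth} applied to suitable linear maps $\phi$, and then reinterpret ``$Q_\cA\in Z_2(\phi)$'' in terms of points of $\PP C$; throughout, $C$ is taken non-degenerate, as is implicit in the notation $Z_2(C)\subset\PP C$. \textbf{Step~1.} I would first observe that $C$ is the image of a supercode if and only if $C=C_\phi$ for some multiplication reduction $\phi\colon\cA\leadsto\cB$. If $C=C_\phi$, then $\widehat C=\im(\id_\cA\times\phi)$ is a supercode with $\pi_\cB(\widehat C)=C$. Conversely, any supercode $\widehat C$ with $\pi_\cB(\widehat C)=C$ is automatically the graph of a linear map: if $(a,0)\in\widehat C$, choose $(1,b)\in\widehat C$ (possible as $\pi_\cA|_{\widehat C}$ is onto), so that $(a,0)=(a,0)\cdot(1,b)\in\widehat C\deux$, forcing $a=0$ by injectivity of $\pi_\cB|_{\widehat C\deux}$; hence $\pi_\cB|_{\widehat C}$ is injective, $\dim\widehat C=\dim C=k=\dim\cA$, and $\pi_\cA|_{\widehat C}$ is bijective. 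Then $\phi:=\pi_\cB\circ(\pi_\cA|_{\widehat C})^{-1}$ has graph $\widehat C$ and is a multiplication reduction (e.g.\ with $\omega=\pi_\cA\circ(\pi_\cB|_{\widehat C\deux})^{-1}$ on $C\deux$, extended arbitrarily; compare~\ref{trivialrepr}), with $C_\phi=C$. As $C$ is non-degenerate, any such $\phi$ is non-degenerate in the sense of~\ref{Pphi}.

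\textbf{Step~2.} Next I would match $\cA$-points of $\PP C$ avoiding every rational hyperplane with identifications $\cA\simeq C$. A rational hyperplane of $\PP C$ is $V_+(\ell)$ with $0\neq\ell\in\Gamma(\PP C,\cO(1))=C$, and an $\cA$-point $Q$ of $\PP C$ --- a surjection $C\tens_\FF\cA\surj L$ with $L$ invertible, i.e.\ after trivializing $L\simeq\cA$ an $\FF$-linear map $\lambda_Q\colon C\to\cA$ (well defined up to a unit) --- lies on $V_+(\ell)$ exactly when $\lambda_Q(\ell)=0$. Thus $Q$ avoids all rational hyperplanes iff $\lambda_Q$ is injective, hence (equal dimensions) an isomorphism; in coordinates, $Q=(\alpha_1:\cdots:\alpha_k)$ with $\alpha_1,\dots,\alpha_k$ an $\FF$-basis of $\cA$. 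Now each non-degenerate linear $\phi\colon\cA\to\cB$ with $C_\phi=C$ factors as $\cA\xrightarrow{\ \phi'\ }C\inj\cB$ with $\phi'$ an isomorphism; the induced isomorphism $\PP\cA\simeq\PP C$ of~\ref{mainth} identifies $Z_2(\phi)$ with $Z_2(C)$ (via $I_2(\phi)=(S^2\phi)^{-1}I_2(C)$), and a direct computation on invertible quotients shows it carries the canonical point $Q_\cA$ to the point $Q$ with $\lambda_Q=(\phi')^{-1}$. Since the only remaining freedom in $Q\leftrightarrow\phi'$ is a rescaling of $\lambda_Q$ by $\cA^\times$ --- which changes $\phi$ only within its diagonal equivalence class, harmless by~\ref{equivalence} --- the points so obtained are precisely the $\cA$-points of $\PP C$ off every rational hyperplane.

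\textbf{Step~3.} Combining: $C$ is the image of a supercode $\iff$ (Step~1) there is a non-degenerate $\phi\colon\cA\to\cB$ with $C_\phi=C$ that is a multiplication reduction $\iff$ (Theorem~\ref{mainth}) there is such a $\phi$ with $Q_\cA\in Z_2(\phi)$ $\iff$ (Step~2, transporting along $\PP\cA\simeq\PP C$) $Z_2(C)$ contains a point $Q\in\PP C(\cA)$ avoiding every rational hyperplane --- which is the stated condition. The one genuinely delicate point I anticipate is in Step~2: correctly identifying the image of the canonical point $Q_\cA$ under $\PP\cA\simeq\PP C$, and checking that ``avoids every rational hyperplane'' describes exactly the locus swept out by these images. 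The rest is formal bookkeeping once Theorem~\ref{mainth} is available.
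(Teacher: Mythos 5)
Your proof is correct and follows essentially the same route as the paper's: reduce to Theorem~\ref{mainth} by matching $\cA$-points of $\PP C$ avoiding rational hyperplanes with identifications $\cA\simeq C$, so that such a point $Q$ becomes the canonical point $Q_\cA$ under $\PP\cA\simeq\PP C$. The paper's proof does this in coordinates (choosing bases $c_i$ of $C$ and observing that $a_i$ off a rational hyperplane must be a basis of $\cA$), whereas you work intrinsically with invertible quotients; both are equivalent. One small difference worth noting: the paper leaves the ``only if'' direction to the reader with ``the converse works similarly,'' whereas your Step~1 spells it out, in particular the clean little argument that for a supercode $\widehat C$ with $\pi_\cB(\widehat C)=C$ the map $\pi_\cB|_{\widehat C}$ is automatically injective (via $(a,0)=(a,0)\cdot(1,b)\in\widehat C\deux$), forcing $\dim\widehat C=k$ and $\pi_\cA|_{\widehat C}$ bijective; this makes the reverse implication explicit rather than implicit.
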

\begin{proof}
Choose a basis $c_1,\dots,c_k$ of $C$, and identify $\PP C\simeq\PP^{k-1}$ accordingly.
Assume there exists such a point $Q=(a_1:\dots:a_k)\in Z_2(C)(\cA)\subset\PP^{k-1}(\cA)$ not lying in a rational hyperplane.
Then $a_1,\dots,a_k$ are linearly independent, so they form a basis of $\cA$.
Identifying $\PP\cA\simeq\PP^{k-1}$ accordingly, we then have $Q=Q_{\cA}$.
Now if $\phi:\cA\to C\subset\cB$ is the linear map that sends $a_i$ to $c_i$,
then $\phi$ satisfies the conditions in the Theorem, so $\phi$ is a multiplication reduction.
Then $\widehat{C}=\im(\id_{\cA}\times\phi)$ is a supercode, and $C=C_\phi=\pi_{\cB}(\widehat{C})$ as wished.

The converse works similarly.
\end{proof}

\begin{remark}
The condition in the Corollary can be rephrased as the existence of a rational point on (a Zariski open set of) a certain scheme.
Indeed, points in $Z_2(C)(\cA)$ correspond to rational points of the Weil restriction of scalars of the base change to $\cA$ of $Z_2(C)$.
And the points not lying in a rational hyperplane of $\PP^{k-1}$ are those in the Zariski open defined by the non-vanishing of the determinant, in this Weil restriction.
\end{remark}

\begin{remark}
If $\FF=\F_q$ is a finite field, then there are only finitely many rational
hyperplanes in $\PP^{k-1}$, so their complement $U$ is a Zariski open set.
Then the condition in the Corollary also asks
\beqv
(Z_2(C)\cap U)(\cA)\neq\emptyset.
\eeqv
\end{remark}

\section{Experimental results}
\label{exp}

\entry
In this section we describe the quadratic hull of all the multiplication algorithms of minimal length
for some algebras of small cardinality over $\F_2$ and $\F_3$.
Let us first recall how these algorithms can be found exhaustively.

Let $\cA$ be an algebra over a field $\FF$. Given a basis $a_1,\dots,a_k$ of $\cA$, the product of
two elements $x,y\in\cA$ can be expanded as
\beqv
xy=B_1(x,y)a_1+\cdots+B_k(x,y)a_k,
\eeqv
which uniquely determines a family of $k$ symmetric bilinear forms $B_1,\dots,B_k$ on~$\cA$.
In order to fix notations, we state as a formal lemma what is only
a rephrasing of the definitions (and, as explained in the Introduction,
was already implicit in the very first papers on the topic):
\begin{lemma*}
A linear map $\phi=(\phi_1,\dots,\phi_n):\cA\to\FF^n$ is a multiplication algorithm if and only if
\beqv
B_1,\dots,B_k\in\linspan{\phi_1^{\tens 2},\dots,\phi_n^{\tens 2}}.
\eeqv
\end{lemma*}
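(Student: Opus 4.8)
The plan is to unwind both sides of the claimed equivalence into coordinates and observe that they literally say the same thing; as the authors note, this is only a reformulation of the definitions, so I do not expect any real obstacle.

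First I would spell out what it means for a linear map $\phi=(\phi_1,\dots,\phi_n):\cA\to\FF^n$ to be a multiplication algorithm: it is the existence of an adjoint $\omega:\FF^n\to\cA$ with $aa'=\omega(\phi(a)*\phi(a'))$ for all $a,a'\in\cA$. Writing $\omega_j=\omega(e_j)\in\cA$ for the images of the standard basis vectors of $\FF^n$, and expanding $\omega_j=\sum_{i=1}^k\omega_{ji}a_i$ in the given basis of $\cA$, this identity becomes
\[
aa'=\sum_{j=1}^n\phi_j(a)\phi_j(a')\,\omega_j=\sum_{i=1}^k\Bigl(\sum_{j=1}^n\omega_{ji}\,\phi_j(a)\phi_j(a')\Bigr)a_i .
\]

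Next I would compare this with the defining expansion $aa'=\sum_{i=1}^k B_i(a,a')\,a_i$. Since $a_1,\dots,a_k$ is a basis of $\cA$, matching the coefficients of each $a_i$ shows that an adjoint $\omega$ exists if and only if there are scalars $(\omega_{ji})_{1\le j\le n,\,1\le i\le k}$ with
\[
B_i(a,a')=\sum_{j=1}^n\omega_{ji}\,\phi_j(a)\phi_j(a')\qquad\text{for all }a,a'\in\cA\text{ and all }i ,
\]
that is, $B_i=\sum_{j=1}^n\omega_{ji}\,\phi_j^{\tens 2}$ as symmetric bilinear forms on $\cA$, for each $i$. This last condition is visibly the same as $B_i\in\linspan{\phi_1^{\tens 2},\dots,\phi_n^{\tens 2}}$ for every $i$, which is the asserted criterion. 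For the forward direction one simply reads off the coefficients $\omega_{ji}$ from chosen linear combinations; for the converse one defines $\omega$ on the standard basis by $\omega(e_j)=\sum_i\omega_{ji}a_i$ and extends linearly.

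The only point requiring a word of care — and it is entirely routine — is the bookkeeping identification of $\phi_j^{\tens 2}$ with the symmetric bilinear form $(a,a')\mapsto\phi_j(a)\phi_j(a')$, so that the two displays genuinely match; note that the $B_i$ are symmetric (which is what makes the membership question sensible) precisely because $\cA$ is commutative, and products $\phi_j^{\tens 2}$ are automatically symmetric. Since nothing else about $\cA$ is used, the equivalence is purely formal and I expect no hard step.
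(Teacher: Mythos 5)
Your proof is correct and follows essentially the same route as the paper's: the paper expresses the multiplication tensor two ways, as $T_{\cA}=\sum_j\omega_j\tens\phi_j^{\tens 2}$ and as $T_{\cA}=\sum_i a_i\tens B_i$, then expands the $\omega_j$ in the basis $a_1,\dots,a_k$ to match them, which is exactly your coefficient-matching computation written more compactly in tensor language. No substantive difference.
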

Here $\linspan{\phi_1^{\tens 2},\dots,\phi_n^{\tens 2}}$ is the linear span of $\phi_1^{\tens 2},\dots,\phi_n^{\tens 2}$
inside $(\cA^\vee\tens\cA^\vee)^{\textrm{sym}}$, the space of symmetric tensors in $\cA^\vee\tens\cA^\vee$,
identified with the space of symmetric bilinear forms on $\cA$.
\begin{proof}[Proof of the Lemma]
Let $T_{\cA}\in\cA\tens(\cA^\vee\tens\cA^\vee)^{\textrm{sym}}$ be the multiplication tensor.
Then $\phi$ is a multiplication algorithm if and only if it has an adjoint~$\omega$,
i.e. if and only if one can write
\beqv
T_{\cA}=\omega_1\tens\phi_1^{\tens 2}+\cdots+\omega_n\tens\phi_n^{\tens 2}\;\in\;\cA\tens\linspan{\phi_1^{\tens 2},\dots,\phi_n^{\tens 2}}.
\eeqv
However, by definition of $B_1,\dots,B_k$, we also have
\beqv
T_{\cA}=a_1\tens B_1+\cdots+a_k\tens B_k.
\eeqv
Writing $\omega_1,\dots,\omega_n$ in the basis $a_1,\dots,a_k$ identifies these two expressions and concludes.
\end{proof}

Now in $V=(\cA^\vee\tens\cA^\vee)^{\textrm{sym}}$ set:
\begin{itemize}
\item $T=\linspan{B_1,\dots,B_k}$, or more intrinsically, $T=\{l\circ m_{\cA}:\:l\in\cA^\vee\}$
\item $\cE=\{l^{\tens 2}:\:l\in\cA^\vee\moins\{0\}\}$, the set of elementary symmetric tensors.
\end{itemize}

By the Lemma, if $\phi=(\phi_1,\dots,\phi_n)$ is a multiplication algorithm,
then $W=\linspan{\phi_1^{\tens 2},\dots,\phi_n^{\tens 2}}$ contains $T$;
moreover, if we assume that no shorter algorithm exists, then $\phi_1^{\tens 2},\dots,\phi_n^{\tens 2}$
are linearly independent, and $\dim W=n$.
Conversely, if $W\subset V$ is a subspace of dimension $n$ containing $T$
and admitting a basis $\phi_1^{\tens 2},\dots,\phi_n^{\tens 2}$ made of elements of $\cE$,
then  $\phi=(\phi_1,\dots,\phi_n)$ is a multiplication algorithm of length $n$ for $\cA$.

Thus, finding all these algorithms reduces to the following:

\begin{problem}
\label{pbVTE}
Let $V$ be a vector space over $\FF$, and $k,n$ two integers. Given:
\begin{itemize}
\item $T\subset V$ a ``target'' subspace of dimension $\dim T=k$
\item $\cE\subset V$ a set of generating elements.
\end{itemize}
Find all subspaces $W\subset V$ such that
\begin{enumerate}[(i)]
\item $T\subset W$
\item $W$ is generated by elements in $\cE$
\item $\dim W=n$.
\end{enumerate}
\end{problem}

A key observation of Oseledets \cite{Oseledets08} is that these $W$ are of the form $W=T\oplus\linspan{e_1,\dots,e_{n-k}}$
for some $e_1,\dots,e_{n-k}\in\cE$.
Thus they can be found by a search among $(n-k)$-tuples of elements of $\cE$,
instead of a naive search among $n$-tuples.

This was implemented systematically by Barbulescu, Detrey, Estibals and Zimmermann in \cite{BDEZ12},
which allowed them, among other results, to find all the optimal (symmetric) multiplication algorithms
for $\F_{2^k}$ over $\F_2$ for $k\leq 6$,
for $\F_{3^k}$ over $\F_3$ for $k\leq 5$,
for $\F_2[t]/(t^k)$ over $\F_2$ for $k\leq 6$,
for $\F_3[t]/(t^k)$ over $\F_3$ for $k\leq 5$,
for $\F_2[t]/(t^k-1)$ over $\F_2$ for $k\leq 7$,
and for $\F_3[t]/(t^k-1)$ over $\F_3$ for $k\leq 6$.
They wrote a computer program \cite{Zimmermann} that solves Problem~\ref{pbVTE} for arbitrary $V,T,\cE$ over $\F_2$ or $\F_3$.

Then in \cite{Covanov19} Covanov gives further improvements, using specifically
the geometry and symmetries of spaces of bilinear forms.

\entry
In the approach above, one space $W$ might admit several bases in $\cE$, hence correspond to several
multiplication algorithms.
However, an interesting fact is that all these algorithms will have the same quadratic hull, i.e. this quadratic hull only depends on $W$.

Indeed, first recall that there is a natural duality between $(\cA^\vee\tens\cA^\vee)^{\textrm{sym}}$
and $S^2\cA=\Gamma(\PP\cA,\cO(2))$.

In coordinates, if $S\in(\cA^\vee\tens\cA^\vee)^{\textrm{sym}}$
is written as a symmetric matrix
\beqv
S=(s_{i,j})_{\substack{1\leq i\leq k\\ 1\leq j\leq k}}
\eeqv
and if $Q\in\Gamma(\PP\cA,\cO(2))$ is written as a quadratic form
\beqv
Q=\sum_{1\leq i\leq j\leq k}q_{i,j}x_ix_j,
\eeqv
then this duality is given by
\beqv
\langle Q,S\rangle=\sum_{1\leq i\leq j\leq k}q_{i,j}s_{i,j}.
\eeqv
More intrinsically, if $S=l^{\tens 2}$ for $l\in\cA^\vee$, 
we have
\beqv
\langle Q,l^{\tens 2}\rangle=Q(l),
\eeqv
and then this is extended by linearity for general $S\in(\cA^\vee\tens\cA^\vee)^{\textrm{sym}}$.

\begin{proposition}
\label{I2phiperpW}
Let $\phi=(\phi_1,\dots,\phi_n):\cA\leadsto\FF^n$ be a multiplication algorithm,
let $W=\linspan{\phi_1^{\tens 2},\dots,\phi_n^{\tens 2}}$ be the linear span of $\phi_1^{\tens 2},\dots,\phi_n^{\tens 2}$
in $(\cA^\vee\tens\cA^\vee)^{\textrm{sym}}$, and let $I_2(\phi)\subset S^2\cA$ be the space of defining equations
for the quadratic hull $Z_2(\phi)$. Then, under the duality above, we have
\beqv
I_2(\phi)=W^\perp.
\eeqv
\end{proposition}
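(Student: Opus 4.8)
The plan is to unwind both sides of the claimed equality $I_2(\phi) = W^\perp$ directly in terms of the duality pairing $\langle Q, l^{\tens 2}\rangle = Q(l)$ described just above, and to exploit the description of $I_2(\phi)$ recorded in~\ref{mainth}, namely
\beqv
I_2(\phi) = \ker\bigl(S^2\cA \overset{S^2\phi}{\longto} S^2\cB \overset{m_{\cB}}{\longto}\cB\bigr)
\eeqv
which, since here $\cB = \FF^n$ and $\phi = (\phi_1,\dots,\phi_n)$, unwinds componentwise: a quadratic form $Q \in S^2\cA$ lies in $I_2(\phi)$ if and only if $Q(\phi_1) = \dots = Q(\phi_n) = 0$ in $\FF$, where $Q(\phi_i)$ denotes the image of $Q$ under the $i$-th component of $S^2\phi$, i.e. the scalar obtained by substituting the linear form $\phi_i \in \cA^\vee$ into $Q$.

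The key step is then to match this with the pairing. First I would verify that, under the identification $S^2\phi: S^2\cA \to S^2\cB$ followed by the $i$-th projection, the scalar $Q(\phi_i)$ coincides with $\langle Q, \phi_i^{\tens 2}\rangle$; this is exactly the intrinsic formula $\langle Q, l^{\tens 2}\rangle = Q(l)$ applied with $l = \phi_i$, so it is immediate from the definition of the duality. Granting this, $Q \in I_2(\phi)$ if and only if $\langle Q, \phi_i^{\tens 2}\rangle = 0$ for all $i = 1,\dots,n$. By bilinearity of the pairing, this last condition is equivalent to $Q$ annihilating the entire linear span $W = \linspan{\phi_1^{\tens 2},\dots,\phi_n^{\tens 2}}$, i.e. to $Q \in W^\perp$. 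Hence $I_2(\phi) = W^\perp$, as claimed.

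The only genuine point requiring care — and what I expect to be the main (modest) obstacle — is the compatibility of trivializations and "passing to the square": one must check that the composite $S^2\cA \overset{S^2\phi}{\to} S^2(\FF^n) \overset{m}{\to} \FF^n$, read in the $i$-th coordinate, really is the substitution map $Q \mapsto Q(\phi_i)$, with no stray scalar. This is routine once one writes $\phi_i$ in coordinates as a row of the matrix of $\phi$ and expands $Q = \sum_{i\le j} q_{ij} x_i x_j$, but it is the place where a sign or normalization error could creep in. Everything else — the bilinearity argument reducing "vanishing on a spanning set" to "vanishing on the span", and the observation that the conclusion $I_2(\phi) = W^\perp$ manifestly depends only on $W$ and not on the chosen basis of elementary tensors — is formal. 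As a sanity check one may also compare dimensions: $\dim W^\perp = \binom{k+1}{2} - \dim W = \binom{k+1}{2} - n$ (using that $\phi_1^{\tens 2},\dots,\phi_n^{\tens 2}$ are independent when $\phi$ has minimal length), which agrees with $\dim I_2(\phi) = \binom{k+1}{2} - \dim C_\phi\deux$ from~\ref{dimC2} together with $\dim C_\phi\deux = n$ in the non-degenerate minimal-length case.
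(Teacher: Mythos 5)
Your proposal is correct and follows the same route as the paper's (one-line) proof: it identifies $I_2(\phi)$ as the set of $Q$ with $Q(\phi_1)=\dots=Q(\phi_n)=0$, recognizes $Q(\phi_i)=\langle Q,\phi_i^{\otimes 2}\rangle$ via the duality, and concludes by bilinearity. You simply fill in more of the routine details (trivialization compatibility, the dimension-count sanity check) that the paper leaves implicit.
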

\begin{proof}
By definition, $I_2(\phi)=\{Q\in S^2\cA:\:Q(\phi_1)=\dots=Q(\phi_n)=0\}$. 
\end{proof}

\begin{corollary}
\label{pointsZ2phi}
The elementary tensors in $W$ are precisely the $l^{\tens 2}$ for $l\in Z_2(\phi)$ (up to a multiplicative constant).

As a consequence, we have $\#Z_2(\phi)(\F_q)\geq n$,
with equality if and only if $\phi$ is the only multiplication algorithm of length~$n$ corresponding to $W$.
\end{corollary}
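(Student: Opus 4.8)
The approach is to deduce everything from Proposition~\ref{I2phiperpW}. Since the pairing between $(\cA^\vee\tens\cA^\vee)^{\textrm{sym}}$ and $S^2\cA$ is perfect, that proposition also reads $W=(W^\perp)^\perp=I_2(\phi)^\perp$. The first step is to characterize the elementary tensors lying in $W$: for $l\in\cA^\vee\moins\{0\}$ the identity $\langle Q,l^{\tens2}\rangle=Q(l)$ gives
\beqv
l^{\tens2}\in W\quad\Longleftrightarrow\quad Q(l)=0\ \text{ for every }Q\in I_2(\phi)\quad\Longleftrightarrow\quad [l]\in Z_2(\phi)(\FF),
\eeqv
the last equivalence being the definition of $Z_2(\phi)=Z(I_2(\phi))$ on rational points. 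Conversely every rank-one symmetric tensor is an $l^{\tens2}$ up to a scalar, and $l^{\tens2}$ recovers $[l]$ (the radical of the form $l\tens l$ is $\ker l$); so $[l]\mapsto l^{\tens2}$ is a bijection between $Z_2(\phi)(\FF)$ and the elementary tensors of $W$ taken up to a multiplicative constant, which is the first assertion.

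The inequality then follows at once: in the optimal-length setting of this section $\phi_1^{\tens2},\dots,\phi_n^{\tens2}$ is a basis of $W$, hence these are pairwise non-proportional, so $[\phi_1],\dots,[\phi_n]$ are $n$ distinct rational points of $Z_2(\phi)$, whence $\#Z_2(\phi)(\FF)\geq n$.

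For the equality statement I would first record that a multiplication algorithm of length $n$ corresponding to $W$ amounts exactly to a choice of $n$ elementary tensors of $W$ forming a basis: the requirement $T\subset W$ from the Lemma above is automatic here, since $T\subset W$ already because $\phi$ is an algorithm, and this only involves $W$. By the first part, such a choice is the same as a choice of $n$ points of $Z_2(\phi)(\FF)$ whose associated squares are linearly independent, and $\phi$ itself corresponds to $\{[\phi_1],\dots,[\phi_n]\}$. Hence, if $\#Z_2(\phi)(\FF)=n$, these are the only points available and $\phi$ is the unique such algorithm (up to permuting and rescaling the $n$ coordinates). Conversely, if $\#Z_2(\phi)(\FF)>n$, pick $[\psi]\in Z_2(\phi)(\FF)$ distinct from all the $[\phi_i]$; expanding $\psi^{\tens2}$ in the basis $(\phi_i^{\tens2})$ some coefficient is nonzero, and a Steinitz exchange replaces the corresponding $\phi_j^{\tens2}$ by $\psi^{\tens2}$, producing a new basis of $W$ made of $n$ pairwise non-proportional elementary tensors, i.e.\ a length-$n$ algorithm for $\cA$ corresponding to $W$ that uses the extra point $[\psi]$ and so differs from $\phi$.

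The step I expect to need the most care is not the linear algebra itself but fixing, beforehand, the equivalence relation (permutation of the $n$ coordinates together with an independent rescaling of each) under which ``the only multiplication algorithm of length $n$ corresponding to $W$'' is to be understood, so that the exchange argument genuinely produces a \emph{different} algorithm rather than a relabelling of $\phi$. With that convention in place, the rest is routine.
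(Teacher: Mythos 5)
Your first paragraph reproduces the paper's one-line proof (combining Proposition~\ref{I2phiperpW} with the duality $\langle Q,l^{\tens2}\rangle=Q(l)$); that proof covers only the characterisation of the elementary tensors lying in $W$, and the paper leaves the counting assertions as an unargued consequence, which your basis-exchange argument fills in correctly. Your closing caveat is also on point: ``the only multiplication algorithm of length $n$ corresponding to $W$'' must be read up to permutation of the coordinates and independent rescaling of each $\phi_i$, neither of which changes the set $\{[\phi_1],\dots,[\phi_n]\}\subset Z_2(\phi)(\FF)$, and with that convention your exchange step really does produce a different algorithm.
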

\begin{proof}
Indeed, $l^{\tens 2}\in W=I_2(\phi)^\perp$ if and only if $Q(l)=0$ for all $Q\in I_2(\phi)$. 
\end{proof}

\begin{corollary}
\label{Z2phici}
Set $k=\dim\cA$, and let $\phi:\cA\leadsto\FF^n$ be a multiplication algorithm of minimal length,
i.e. with $n$ equal to the (symmetric) bilinear complexity $\mu_{\FF}^\mathrm{sym}(\cA)$.
Then $Z_2(\phi)$ has codimension at most $\binom{k+1}{2}-n$ in $\PP^{k-1}$,
with equality if and only if it is a complete intersection. 
\end{corollary}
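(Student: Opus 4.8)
The plan is to reduce the statement to a dimension count for $I_2(\phi)$ followed by Krull's generalized principal ideal theorem and its Cohen--Macaulay converse. The first task is to pin down $\dim I_2(\phi)$. Because $\phi$ has minimal length, none of its coordinates $\phi_i$ can vanish (otherwise one could discard that coordinate and shorten the algorithm), so $\phi$ is non-degenerate and $Z_2(\phi)$ is defined; moreover, as recalled just before Problem~\ref{pbVTE}, minimality forces $\phi_1^{\tens 2},\dots,\phi_n^{\tens 2}$ to be linearly independent, so $W=\linspan{\phi_1^{\tens 2},\dots,\phi_n^{\tens 2}}$ has dimension $n$ in $(\cA^\vee\tens\cA^\vee)^{\textrm{sym}}$. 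By Proposition~\ref{I2phiperpW} we have $I_2(\phi)=W^\perp$ inside $S^2\cA$, hence $\dim I_2(\phi)=\binom{k+1}{2}-n$; write $d$ for this number.

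Now $Z_2(\phi)=Z(I_2(\phi))\subset\PP^{k-1}$ is by definition cut out by the ideal generated by a basis $q_1,\dots,q_d$ of $I_2(\phi)$, i.e. by $d$ quadratic forms, so Krull's generalized principal ideal theorem in $\FF[x_1,\dots,x_k]$ yields $\codim Z_2(\phi)\le d$, which is the first assertion. For the equality case I would invoke the standard fact that a polynomial ring is Cohen--Macaulay, so that an ideal generated by $d$ elements has height at most $d$, with equality precisely when those elements form a regular sequence. Thus $\codim Z_2(\phi)=d$ holds exactly when $q_1,\dots,q_d$ is a regular sequence, that is, exactly when $Z_2(\phi)$ is a complete intersection. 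The one point needing care is that ``complete intersection'' must be read relative to the defining quadrics; but since $I_2(\phi)$ lives in the single degree $2$, any basis of it is automatically a \emph{minimal} homogeneous generating set of the ideal $(I_2(\phi))$ (a quadric lying in the ideal generated by the others would yield a linear relation among the $q_i$), so the number $d$ of defining equations is intrinsic, and it equals $\codim Z_2(\phi)$ exactly when the scheme is a complete intersection. I would also remark that in the equality case, when $d\le k-1$, the quotient $\FF[x_1,\dots,x_k]/(q_1,\dots,q_d)$ is Cohen--Macaulay of positive Krull dimension, hence of positive depth, so $(I_2(\phi))$ is already saturated and $Z_2(\phi)$ is a complete intersection in the scheme-theoretic sense as well.

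I expect the only genuine obstacle to be expository: fixing the precise meaning of ``complete intersection'' in this setting and citing the exact form of the Cohen--Macaulay converse to the height theorem. Beyond the input of Proposition~\ref{I2phiperpW}, the argument is routine commutative algebra.
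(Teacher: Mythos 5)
Your proof is correct and takes essentially the same approach as the paper: establish $\dim I_2(\phi)=\binom{k+1}{2}-n$ from the linear independence of $\phi_1^{\tens 2},\dots,\phi_n^{\tens 2}$ together with Proposition~\ref{I2phiperpW}, and then deduce the codimension bound and the complete-intersection criterion. The paper compresses the last step into ``and we conclude''; your Krull height theorem plus Cohen--Macaulay elaboration is exactly the content left implicit there.
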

\vspace{-\baselineskip}
\begin{proof}
Minimality of $n$ implies that $\phi_1^{\tens 2},\dots,\phi_n^{\tens 2}$ are linearly independent,
so $\dim W=n$, so $\dim I_2(\phi)=\binom{k+1}{2}-n$. Thus $Z_2(\phi)$ is defined by $\binom{k+1}{2}-n$
linearly independent quadratic equations, and we conlude.
\end{proof}

\vspace{-.5\baselineskip}
There are many examples of algebras $\cA$ for which one knows multiplication algorithms $\phi$ of
(possibly not minimal) length $n$ much smaller than $\binom{k+1}{2}$ --- for instance, $n$ linear in $k$.
This makes one expect $Z_2(\phi)$ to be low dimensional.

\entry
Zimmermann \cite{Zimmermann} kindly gave to the author a copy of the computer program that solves Problem~\ref{pbVTE}.
The output of this program is the list of the subspaces $W$, each given by a basis in row echelon form,
thus in general not made from elements of $\cE$.
By Proposition~\ref{I2phiperpW}, this suffices to find the quadratic hull of the corresponding multiplication
algorithms, with no need to compute these multiplication algorithms explicitly.

For $q=2$, $\cA_k=\F_{q^k}$, one gets the minimal length $n=\mu_q^\mathrm{sym}(\cA_k)$, the number of subspaces $W$,
and the quadratic hull of the corresponding algorithms, given as follows:

\vspace{.5\baselineskip}
\noindent\begin{minipage}{13cm}
\begin{center}
{\small Table 1: quadratic hull of minimal multiplication algorithms for $q=2$, $\cA_k=\F_{q^k}$}
\end{center}
\beqv
\begin{array}{|c|c|c|c|c|}
\hline
k & n & \textrm{nb. of $W$} & Z_2(\phi) & \#Z_2(\phi)(\F_q)\rule{0pt}{9pt}\\
\hline
2 & 3 & 1 & \PP^1 & 3\rule{0pt}{10pt}\\[1pt]
3 & 6 & 1 & \PP^2 & 7\\[1pt]
4 & 9 & 25 & \textrm{quadric}\simeq\PP^1\times\PP^1 & 9\\[1pt]
5 & 13 & 2015 & {\small\left\{\begin{array}{lc}310: & \textrm{smooth del Pezzo surface}\\[-2pt] 1705: & \textrm{singular del Pezzo surface}\end{array}\right.}& 13\\[3pt]
6 & 15 & 21 & {\small\begin{array}{c}\textrm{(connected) union of a pair}\\[-4pt] \textrm{of conjugate conics and five $\PP^1$}\end{array}} & 15\\[3pt]
\hline
\end{array}
\eeqv
\end{minipage}
\vspace{\baselineskip}

For $k\leq 5$, each of these quadratic hulls is a complete intersection.
In particular for $k=5$ (first case, smooth del Pezzo surface) one retrieves the example of section~\ref{dP}.

For $k=6$ we have $\codim Z_2(\phi)=4<\binom{k+1}{2}-n=6$, so we do not get a complete intersection.
Each of the two conjugate conics intersects each of the $\PP^1$ at one point. Otherwise no other pair of components intersect.
The canonical point is in one of the two conics.

\entry
For $q=3$, $\cA_k=\F_{q^k}$, we find likewise:

\vspace{.5\baselineskip}
\noindent\begin{minipage}{13cm}
\begin{center}
{\small Table 2: quadratic hull of minimal multiplication algorithms for $q=3$, $\cA_k=\F_{q^k}$}
\end{center}
\beqv
\begin{array}{|c|c|c|c|c|}
\hline
k & n & \textrm{nb. of $W$} & Z_2(\phi) & \#Z_2(\phi)(\F_q)\rule{0pt}{9pt}\\
\hline
2 & 3 & 1 & \PP^1 & 4\rule{0pt}{10pt}\\[1pt]
3 & 6 & 1 & \PP^2 & 13\\[1pt]
4 & 9 & 234 & {\small\left\{\begin{array}{lc}84: & \textrm{quadric}\simeq\mathbf{S}^2\\[-2pt] 150: & \textrm{quadric}\simeq\PP^1\times\PP^1\end{array}\right.} & {\small\begin{array}{c}10\\[-2pt] 16\end{array}}\\[2pt]
5 & 11 & 121 & {\small\begin{array}{c}\textrm{\{$11$ rational points and}\\[-2pt]\textrm{one point of degree $5$\}}\end{array}}& 11\\[1pt]
\hline
\end{array}
\eeqv
\end{minipage}
\vspace{\baselineskip}

These are all complete intersections.
In particular for $k=5$, $Z_2(\phi)$ is finite of length $16$, intersection of four quadrics;
actually it is the trivial geometric representation from~\ref{trivialrepr}.

\entry
For $q=2$, $\cA_k=\F_q[t]/(t^k)$, we get:

\vspace{.5\baselineskip}
\noindent\begin{minipage}{13cm}
\begin{center}
{\small Table 3: quadratic hull of minimal multiplication algorithms for $q=2$, $\cA_k=\F_q[t]/(t^k)$}
\end{center}
\beqv
\begin{array}{|c|c|c|c|c|}
\hline
k & n & \textrm{nb. of $W$} & Z_2(\phi) & \#Z_2(\phi)(\F_q)\rule{0pt}{9pt}\\
\hline
2 & 3 & 1 & \PP^1 & 3\rule{0pt}{10pt}\\[1pt]
3 & 5 & 2 & \PP^1\cup\PP^1 & 5\\[1pt]
4 & 8 & 4 & \PP^2\cup\PP^1 & 9\\[1pt]
5 & 11 & 112 & {\small\left\{\begin{array}{lc}96: & $\qquad$\textrm{quadric}\,\cup\PP^1\\[-2pt] 16: & \qquad\PP^2\cup\PP^2\cup\PP^1\end{array}\right.}& {\small\begin{array}{c}11\\[-2pt] 13\end{array}}\\[4pt]
6 & 14 & 384 & {\small\left\{\begin{array}{lc}256: & \textrm{quadric}\,\cup\PP^2\cup\PP^1\\[-2pt] 128: & \textrm{quadric}\,\cup\PP^1\cup\PP^1\cup\PP^1\end{array}\right.} & 15\\[2pt]
\hline
\end{array}
\eeqv
\end{minipage}
\vspace{\baselineskip}

These are complete intersections for $k=2$ or $3$, but not for $k=4,5,6$.

These are all connected.

For $k=3$, $k=4$, and the first case of $k=5$, the two components intersect precisely at a rational point, which is the support of the canonical point.

For the second case of $k=5$, the two $\PP^2$ intersect along a line, and then the $\PP^1$ intersects them at a point on this line, which is the support of the canonical point.

Likewise for the first case of $k=6$, the quadric and the $\PP^2$ intersect along a line, and then the $\PP^1$ intersects them at a point on this line, which is the support of the canonical point.

Last, for the second case of $k=6$, all components concur at one point, which is the support of the canonical point.

\entry
For $q=3$, $\cA_k=\F_q[t]/(t^k)$, we get:

\vspace{.5\baselineskip}
\noindent\begin{minipage}{13cm}
\begin{center}
{\small Table 4: quadratic hull of minimal multiplication algorithms for $q=3$, $\cA_k=\F_q[t]/(t^k)$}
\end{center}
\beqv
\begin{array}{|c|c|c|c|c|}
\hline
k & n & \textrm{nb. of $W$} & Z_2(\phi) & \#Z_2(\phi)(\F_q)\rule{0pt}{9pt}\\
\hline
2 & 3 & 1 & \PP^1 & 4\rule{0pt}{10pt}\\[1pt]
3 & 5 & 3 & \PP^1\cup\PP^1 & 7\\[1pt]
4 & 8 & 252 & {\small\left\{\begin{array}{lc}243: & \quad\textrm{conic}\,\cup\PP^1\cup\PP^1\\[-2pt] 9: & \qquad\PP^2\cup\PP^1\end{array}\right.}  & {\small\begin{array}{c}10\\[-2pt] 16\end{array}}\\[6pt]
5 & 10 & 243 & \textrm{conic}\,\cup\PP^1\cup\PP^1\cup\PP^1 & 13\\[1pt]
\hline
\end{array}
\eeqv
\end{minipage}
\vspace{\baselineskip}

These are complete intersections for $k=2$ or $3$, but not for $k=4$ or $5$.

These are all connected.
For $k=3,4,5$, the components all concur at one point, which is the support of the canonical point.

\entry
Minimal algorithms were also computed for $\cA_k=\F_q[t]/(t^k-1)$, with $q=2$ and $k\leq 7$, and with $q=3$ and $k\leq 6$.
In general $\F_q[t]/(t^k-1)$ is not a local algebra, so it decomposes as the product of its local factors.
It turns out all these examples follow Strassen's (strong) direct sum conjecture: one finds that
the minimal algorithms for $\cA_k$ are obtained by taking products of minimal algorithms for the local factors.
And then, the quadratic hull of the product algorithm is the disjoint union of the quadratic hulls of the factors. 

For instance, we have $\F_2[t]/(t^7-1)\simeq\F_2\times\F_{2^3}\times\F_{2^3}$,
and as predicted by the conjecture, one finds $49$ multiplication algorithms
for $\F_2[t]/(t^7-1)$ of minimal length $n=13$, all of which decomposing
as a product of a minimal algorithm for $\F_2$ ($1$ choice, length $1$),
and a minimal algorithm for each copy of $\F_{2^3}$ ($7$ choices each, length $6$).
All these $49$ multiplication algorithms come from the same subspace $W$,
hence have the same quadratic hull, which is a disjoint union of a point and two copies of $\PP^2$. 

\section{Further properties of the quadratic hull}
\label{further}

\entry\label{extensions}
Given a $[\cB,k]$-code $C\subset\cB$, and a point $P\in\PP C(\FF)$
representing a nonzero linear form $\lambda_P:C\surj\FF$ (up to a multiplicative constant),
the extended code $C_P\subset\cB\times\FF$ is the $[\cB\times\FF,k]$-code
made of the words of the form
\beqv
(c,\lambda_P(c))
\eeqv
for $c\in C$.
(In particular, if $\cB=\FF^n$ and $C$ is a $[n,k]$-code, then $C_P$ is the extended $[n+1,k]$-code of $C$ by $\lambda_P$
in the usual sense.)

In coordinates, if $\GG$ is a generator matrix for $C$,
and if $P\in\PP^{k-1}(\FF)$ is represented as a column vector in $\FF^k$,
then $C_P$ is the code with generator matrix $\GG_P=(\;\GG\;|P)$.

\begin{proposition*}
With these notations, we have either $\dim C_P\deux=\dim C\deux$ or $\dim C_P\deux=\dim C\deux+1$.

Moreover, the rational points of $Z_2(C)$ parameterize \emph{square-preserving} extensions $C_P$ of $C$, i.e. those such that
\beqv
\dim C_P\deux=\dim C\deux.
\eeqv
\end{proposition*}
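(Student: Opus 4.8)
The plan is to analyse $C_P\deux$ via the first-factor projection $p\colon\cB\times\FF\to\cB$, which is a homomorphism of algebras. Since $c\mapsto(c,\lambda_P(c))$ followed by $p$ is the identity of $C$, the restriction of $p$ is a linear isomorphism $C_P\simeq C$; it therefore induces an isomorphism $S^2C_P\simeq S^2C$ and identifies the multiplication map of $C_P$ with
\beqv
\tilde m\colon S^2C\longto\cB\times\FF,\qquad s\longmapsto\bigl(m_\cB(s),\,\ell_P(s)\bigr),
\eeqv
where $\ell_P\colon S^2C\to\FF$ sends $s$ to $\langle s,\lambda_P^{\tens2}\rangle$ for the standard duality pairing $\langle\,,\,\rangle$ between $S^2C$ and the space $(C^\vee\tens C^\vee)^{\textrm{sym}}$ of symmetric bilinear forms on $C$; indeed, on a product $c\cdot c'$ the second coordinate of $\tilde m$ is $\lambda_P(c)\lambda_P(c')$. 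Hence $C_P\deux=\im\tilde m$, and since $I_2(C)=\ker(S^2C\xrightarrow{m_\cB}C\deux)$ we obtain $\ker\tilde m=I_2(C)\cap\ker\ell_P$.

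For the dichotomy I would observe that the restriction of $\ell_P$ to $I_2(C)$ is a linear functional into the one-dimensional space $\FF$, hence is either identically zero or surjective; accordingly $\dim\ker\tilde m$ equals $\dim I_2(C)$ or $\dim I_2(C)-1$. Combining $\dim C_P\deux=\binom{k+1}{2}-\dim\ker\tilde m$ with the identity $\dim C\deux=\binom{k+1}{2}-\dim I_2(C)$ from~\ref{dimC2} then gives $\dim C_P\deux\in\{\dim C\deux,\ \dim C\deux+1\}$. (Equivalently, $p$ induces a surjection $C_P\deux\surj C\deux$ whose kernel $C_P\deux\cap(\{0\}\times\FF)$ is a subspace of a line.)

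To identify the square-preserving case, I would note that, by the above, $\dim C_P\deux=\dim C\deux$ if and only if $\ell_P$ vanishes on $I_2(C)$, i.e. $\langle s,\lambda_P^{\tens2}\rangle=0$ for every $s\in I_2(C)$. But $\langle s,\lambda_P^{\tens2}\rangle$ is exactly the value of $s$ at the rational point $P\in\PP C(\FF)$, that is, the image of $s$ under the fibre map $S^2C=\Gamma(\PP C,\cO(2))\to\cO(2)|_P\simeq\FF$; so this condition says that every element of $I_2(C)$ vanishes at $P$. Since $Z_2(C)=Z(I_2(C))$ and $P$ is rational, it holds precisely when $P\in Z_2(C)(\FF)$. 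Thus the rational points of $Z_2(C)$ are exactly those for which $C_P$ is square-preserving, which is the assertion.

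The one step that is not pure linear algebra, and the likeliest place for a slip, is the identification of the pairing $\langle s,\lambda_P^{\tens2}\rangle$ with the value of the section $s$ at the rational point $P$; once the conventions introduced above for $\PP C$, the projective system $\cP_C$, and the duality pairing are in place, this is merely the definition of evaluation of a section of $\cO(2)$ at a rational point. I would also record explicitly that $C_P$, and with it every statement above, depends only on $P$ and not on the chosen representative $\lambda_P$: rescaling $\lambda_P$ by $\mu\in\FF^\times$ multiplies the second coordinate of each product by $\mu^2$, which alters neither $\dim C_P\deux$ nor the square-preserving property, and merely replaces $C_P$ by the diagonally equivalent code obtained through the unit $(1,\mu)\in(\cB\times\FF)^\times$.
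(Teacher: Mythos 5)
Your proof is correct and follows essentially the same line as the paper's: you compute $\ker(\tilde m)=I_2(C)\cap\ker\ell_P$ where $\ell_P$ is evaluation at $P$, and the dichotomy comes from $\ker\ell_P$ (the paper's $I_2(P)$) being a hyperplane in $S^2C=\Gamma(\PP C,\cO(2))$; the paper phrases this via explicit evaluation maps $q\mapsto(q(P_1),\dots,q(P_m),q(P))$ rather than via the duality pairing with $\lambda_P^{\tens2}$, but the identification you correctly make between $\langle s,\lambda_P^{\tens2}\rangle$ and $s(P)$ shows these are the same computation. Your closing observation that the statement is independent of the choice of representative $\lambda_P$ is a worthwhile addition not spelled out in the paper.
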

\begin{proof}
Take $P\in\PP^{k-1}(\FF)$, write $C\deux$ as the image of the evaluation map
\beqv
\begin{array}{ccc}
\Gamma(\PP^{k-1},\cO(2)) & \longto & \cB\\
q & \mapsto & (q(P_1),\dots,q(P_m))
\end{array}
\eeqv
whose kernel is $I_2(C)=I_2(\{P_1,\dots,P_m\})$,
and write $C_P\deux$ as the image of the evaluation map
\beqv
\begin{array}{ccc}
\Gamma(\PP^{k-1},\cO(2)) & \longto & \cB\times\FF\\
q & \mapsto & (q(P_1),\dots,q(P_m),q(P))
\end{array}
\eeqv
whose kernel is $I_2(C_P)=I_2(\{P_1,\dots,P_m,P\})=I_2(C)\cap I_2(P)$.
However, $I_2(P)$ is a hyperplane in $\Gamma(\PP^{k-1},\cO(2)$, so
\begin{itemize}
\item either $I_2(C)\not\subset I_2(P)$, and then $I_2(C_P)\subsetneq I_2(C)$ has codimension~$1$, and $\dim C_P\deux=\dim C\deux+1$
\item or $I_2(C)\subset I_2(P)$, thus $I_2(C_P)=I_2(C)$ and $\dim C_P\deux=\dim C\deux$.
\end{itemize}
The latter means precisely that $P$ lies in $Z_2(C)$.
\end{proof}

\entry
Recall that points of $\PP^{k-1}\simeq\PP C$ also parameterize hyperplanes of $C$.

If $P\in\PP^{k-1}(\FF)$, its orthogonal $P^\perp$ is a hyperplane in $\FF^k$. The image of $P^\perp$ under the generator matrix $\GG$
is then the hyperplane $H\subset C$ corresponding to $P$.

Equivalently, linear forms vanishing at $P$ form a hyperplane $\Gamma(\PP^{k-1},\fI_P\cO(1))\subset\Gamma(\PP^{k-1},\cO(1))$.
Then we have
\beqv
H=\ev_{\cP_C}(\Gamma(\PP^{k-1},\fI_P\cO(1))).
\eeqv

\begin{proposition*}\label{HC}
Rational points of $Z_2(C)$ parameterize hyperplanes $H\subset C$ such that
\beqv
H\cdot C\subsetneq C\deux.
\eeqv
\end{proposition*}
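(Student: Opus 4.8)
The plan is to realize $H\cdot C$ as an evaluation image, in the same way that $C\deux$ is the image of $\Gamma(\PP^{k-1},\cO(2))$ under $\ev_{\cP_C}$, and then to read off the condition $P\in Z_2(C)$ from a comparison of these two images.

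First I would recall, from the paragraph preceding the statement, that the hyperplane $H\subset C$ attached to a point $P\in\PP^{k-1}(\FF)\simeq\PP C(\FF)$ is $H=\ev_{\cP_C}(\Gamma(\PP^{k-1},\fI_P\cO(1)))$, while $C=\ev_{\cP_C}(\Gamma(\PP^{k-1},\cO(1)))$, and that $P\mapsto H$ is the bijection onto the set of hyperplanes of $C$ recalled just before the statement. Since $\ev_{\cP_C}$ turns multiplication of sections into multiplication in $\cB$, for a linear form $\ell_P$ vanishing at $P$ and an arbitrary linear form $\ell$ the product $\ell_P\ell$ is a quadratic form vanishing at $P$ with $\ev_{\cP_C}(\ell_P\ell)=\ev_{\cP_C}(\ell_P)\,\ev_{\cP_C}(\ell)\in H\cdot C$. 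Conversely I would use the elementary identity
\[
\Gamma(\PP^{k-1},\fI_P\cO(1))\cdot\Gamma(\PP^{k-1},\cO(1))=\Gamma(\PP^{k-1},\fI_P\cO(2)),
\]
which is immediate in coordinates in which $P=(1:0:\cdots:0)$: the left-hand side is spanned by the monomials $x_ix_j$ with $i\geq 2$, that is, by all quadratic monomials except $x_1^2$, which form a basis of the right-hand side. Together these two observations give $H\cdot C=\ev_{\cP_C}(\Gamma(\PP^{k-1},\fI_P\cO(2)))$.

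Now both $H\cdot C$ and $C\deux$ appear as images of the \emph{same} linear map, $\ev_{\cP_C}\colon\Gamma(\PP^{k-1},\cO(2))\to\cB$, whose kernel is $I_2(C)$, applied respectively to the hyperplane $\Gamma(\PP^{k-1},\fI_P\cO(2))$ (the quadratic forms vanishing at $P$) and to the whole of $\Gamma(\PP^{k-1},\cO(2))$. A hyperplane and the ambient space have the same image under a linear map exactly when the kernel is not contained in the hyperplane. Hence $H\cdot C=C\deux$ iff some element of $I_2(C)$ --- i.e.\ some quadratic form vanishing at all of $P_1,\dots,P_m$ --- fails to vanish at $P$, and therefore $H\cdot C\subsetneq C\deux$ iff every element of $I_2(C)$ vanishes at $P$, i.e.\ iff $P\in Z_2(C)$. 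Under the bijection $P\mapsto H$ this is exactly the asserted parameterization; one moreover sees that when the inclusion is strict it has codimension $1$ in $C\deux$, in parallel with Proposition~\ref{extensions} (the kernel obstructing surjectivity being, in both arguments, the subspace $I_2(C)\cap I_2(P)$).

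There is no real obstacle here: the only content beyond bookkeeping is the single identity $\Gamma(\fI_P\cO(1))\cdot\Gamma(\cO(1))=\Gamma(\fI_P\cO(2))$ --- equivalently, that the ideal sheaf of a point is generated in degree $\leq 1$ --- which is the one-line coordinate computation above. The one point to be careful about is to take $H$ to be precisely $\ev_{\cP_C}(\Gamma(\fI_P\cO(1)))$ as in the formula recalled just before the statement, so that the comparison really is between the images of $\Gamma(\fI_P\cO(2))$ and $\Gamma(\cO(2))$ under a single evaluation map.
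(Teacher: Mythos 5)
Your proof is correct and follows essentially the same route as the paper's: identify $H\cdot C$ as the image of $I_2(P)=\Gamma(\PP^{k-1},\fI_P\cO(2))$ under $\ev_{\cP_C}$, then use that $I_2(P)$ is a hyperplane in $\Gamma(\PP^{k-1},\cO(2))$ to deduce $H\cdot C\subsetneq C\deux$ if and only if $\ker(\ev_{\cP_C})=I_2(C)\subset I_2(P)$, i.e.\ $P\in Z_2(C)$. The paper packages the dichotomy as a commutative diagram with exact rows whereas you phrase it as ``image of a hyperplane equals image of the ambient space iff the kernel is not contained in the hyperplane,'' but these are the same argument; you also spell out the identity $\Gamma(\fI_P\cO(1))\cdot\Gamma(\cO(1))=\Gamma(\fI_P\cO(2))$, which the paper uses silently.
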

\begin{proof}
Evaluation at $\cP_C=(P_1,\dots,P_m)$ identifies $\Gamma(\PP^{k-1},\cO(1))$ with $C$,
and maps $\Gamma(\PP^{k-1},\cO(2))$ onto $C\deux$, compatibly with multiplication.

Writing $H=\ev_{\cP_C}(\Gamma(\PP^{k-1},\fI_P\cO(1)))$ we find
\beqv
\begin{split}
H\cdot C&=\ev_{\cP_C}(\Gamma(\PP^{k-1},\fI_P\cO(1)))\cdot\ev_{\cP_C}(\Gamma(\PP^{k-1},\cO(1)))\\
&=\ev_{\cP_C}(\Gamma(\PP^{k-1},\fI_P\cO(1))\cdot\Gamma(\PP^{k-1},\cO(1)))\\
&=\ev_{\cP_C}(\Gamma(\PP^{k-1},\fI_P\cO(2)))
\end{split}
\eeqv
where
\beqv
\Gamma(\PP^{k-1},\fI_P\cO(2))=I_2(P)
\eeqv
is the hyperplane in $\Gamma(\PP^{k-1},\cO(2))$ of quadratic forms vanishing at $P$. 

We thus have a commutative diagram
\beqv
\begin{CD}
0 @>>> I_2(\{P_1,\dots,P_m\}) @>>> \Gamma(\PP^{k-1},\cO(2)) @>\ev_{\cP_C}>> C\deux @>>>0\\
@. @AAA @AAA @AAA @.\\
0 @>>> I_2(\{P_1,\dots,P_m,P\}) @>>> I_2(P) @>\ev_{\cP_C}>> H\cdot C @>>>0
\end{CD}
\eeqv
with exact rows, and injective vertical maps.
Since $I_2(P)$ has codimension $1$ in $\Gamma(\PP^{k-1},\cO(2))$, then
\begin{itemize}[leftmargin=\parindent]
\item either $H\cdot C=C\deux$ and $I_2(\{P_1,\dots,P_m,P\})\subsetneq I_2(\{P_1,\dots,P_m\})$ has codimension $1$
\item or $H\cdot C\subsetneq C\deux$ has codimension $1$ and $I_2(\{P_1,\dots,P_m,P\})=I_2(\{P_1,\dots,P_m\})$.
\end{itemize}
The latter occurs precisely when $P$ lies in $Z_2(\{P_1,\dots,P_m\})=Z_2(C)$.
\vspace{.5\baselineskip}
\end{proof}

\entry\label{H2secante}
Recall that set-theoretically, the secant variety $\Sec(Z)$ of a closed subscheme $Z$ in a projective space,
is the union of all lines, possibly defined over an extension field,
whose intersection with $Z$ has length at least $2$.
This intersection could consist of non-rational points, or of points with multiplicities, so $\Sec(Z)$ really depends
on the scheme structure of $Z$.

\begin{proposition*}
Rational points of the secant variety $\Sec(Z_2(C))$ parameterize a subset of
(and possibly all) the locus of hyperplanes $H\subset C$  that satisfy
\beqv
H\deux\subsetneq C\deux.
\eeqv
\end{proposition*}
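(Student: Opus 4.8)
The plan is to establish the one inclusion the statement asserts: for every rational point $P\in\Sec(Z_2(C))(\FF)$, the hyperplane $H\subset C$ corresponding to $P$ (as in~\ref{HC}) satisfies $H\deux\subsetneq C\deux$; the parenthetical ``possibly all'' is merely the observation that the reverse is not claimed. Exactly as in the proof of Proposition~\ref{HC}, evaluation at $\cP_C$ identifies $\Gamma(\PP^{k-1},\cO(1))$ with $C$ and maps $\Gamma(\PP^{k-1},\cO(2))$ onto $C\deux$ with kernel $I_2(C)$, compatibly with multiplication, and under it $H\deux=\ev_{\cP_C}(\Gamma(\PP^{k-1},\fI_P^2\cO(2)))$, the image of the quadratic forms vanishing to order $\geq2$ at $P$. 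Since $I_2(C)=\ker\ev_{\cP_C}$ already lies inside $\Gamma(\PP^{k-1},\fI_P^2\cO(2))+I_2(C)$, applying $\ev_{\cP_C}$ preserves strictness, so it suffices to prove
\beqv
\Gamma(\PP^{k-1},\fI_P^2\cO(2))+I_2(C)\;\subsetneq\;\Gamma(\PP^{k-1},\cO(2)).
\eeqv

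By definition of the secant variety, $P$ lies on a line $\ell$, defined over some finite extension $L/\FF$, with $\ell\cap Z_2(C)$ of length at least $2$. Base change to $L$ affects neither side of the displayed inclusion (the square of a code and all dimensions commute with field extension), so I work over $L$ and replace $\ell\cap Z_2(C)$ by a length-$2$ closed subscheme $\xi$ of it. Such a $\xi$ spans a unique line, which must be $\ell$ since $\xi\subset\ell$; in particular $P\in\ell$. Moreover $\xi\subset Z_2(C)$ and $Z_2(C)$ is, by construction, the closed subscheme cut out by the ideal generated by $I_2(C)$, so every form in $I_2(C)$ vanishes on $\xi$; that is, $I_2(C)\subseteq I_2(\xi):=\Gamma(\PP^{k-1},\fI_\xi\cO(2))$.

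The crux is then a restriction-to-$\ell$ computation. Restriction of quadratic forms to $\ell\simeq\PP^1$ gives a surjection $\res_\ell\colon\Gamma(\PP^{k-1},\cO(2))\to\Gamma(\PP^1,\cO(2))$, the target being $3$-dimensional. Now $\xi$ is an effective divisor of degree $2$ on $\PP^1$, hence cut out by a single $f\in\Gamma(\PP^1,\cO(2))$ (unique up to a scalar), so every $q\in I_2(\xi)$ has $\res_\ell(q)\in\FF\cdot f$. And a form vanishing doubly at $P$ restricts, on the line $\ell$ through $P$, to a section of $\cO(2)$ on $\PP^1$ vanishing doubly at $P$, hence a multiple of $g^2$, where $g\in\Gamma(\PP^1,\cO(1))$ is an equation of $P$ on $\ell$. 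Therefore
\beqv
\res_\ell\big(\Gamma(\PP^{k-1},\fI_P^2\cO(2))+I_2(C)\big)\;\subseteq\;\res_\ell\big(\Gamma(\PP^{k-1},\fI_P^2\cO(2))\big)+\res_\ell(I_2(\xi))\;\subseteq\;\FF\cdot g^2+\FF\cdot f,
\eeqv
a subspace of dimension at most $2$. Since $\res_\ell$ is onto the $3$-dimensional space $\Gamma(\PP^1,\cO(2))$, the preimage $\Gamma(\PP^{k-1},\fI_P^2\cO(2))+I_2(C)$ cannot be all of $\Gamma(\PP^{k-1},\cO(2))$, which is the displayed inclusion; applying $\ev_{\cP_C}$ as in the first paragraph then gives $H\deux\subsetneq C\deux$.

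I expect no serious obstacle here: the only care needed is bookkeeping, namely verifying that the chosen length-$2$ subscheme genuinely spans the line through $P$ (so that the restriction mechanism applies) and reading $\xi\subset Z_2(C)$ scheme-theoretically to obtain $I_2(C)\subseteq I_2(\xi)$. The core estimate is completely robust and characteristic-free, since it only uses that on a $\PP^1$ the sections of $\cO(1)$, respectively of $\cO(2)$, vanishing (doubly) at a fixed point form a $1$-dimensional space.
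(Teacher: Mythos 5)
Your proof is correct and follows essentially the same route as the paper's: both reduce to showing that $\Gamma(\PP^{k-1},\fI_P^2\cO(2))+I_2(C)$ is proper in $\Gamma(\PP^{k-1},\cO(2))$, then restrict to the secant line and count dimensions there. The only cosmetic difference is that you argue directly that the restriction lands in the at-most-$2$-dimensional span $\FF\cdot g^2+\FF\cdot f$ inside the $3$-dimensional $\Gamma(\ell,\cO(2))$, whereas the paper phrases the same count as the impossibility of $I_2(C)|_L$ both surjecting onto a $2$-dimensional quotient and lying in the $1$-dimensional $\Gamma(L,\fI_{Z_2(C)\cap L}\cO(2))$.
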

\begin{proof}
Let $P\in\Sec(Z_2(C))(\FF)$ parameterize a hyperplane $H\subsetneq C$, and assume by contradiction $H\deux=C\deux$.

We have $H=\ev_{\cP_C}(\Gamma(\PP^{k-1},\fI_P\cO(1)))$, so
\beqv
H\deux=\ev_{\cP_C}(\Gamma(\PP^{k-1},\fI_P^2\cO(2)))
\eeqv
is the evaluation of the space of quadratic forms vanishing at $P$ with multiplicity~$2$.

From the diagram
\beqv
\begin{CD}
0 @>>> I_2(\{P_1,\dots,P_m\}) @>>> \Gamma(\PP^{k-1},\cO(2)) @>\ev_{\cP_C}>> C\deux @>>>0\\
@. @. @AAA @| @.\\
@. @. \Gamma(\PP^{k-1},\fI_P^2\cO(2))  @>\ev_{\cP_C}>> H\deux @>>>0
\end{CD}
\eeqv
we deduce that
\begin{itemize}
\item $\Gamma(\PP^{k-1},\fI_P^2\cO(2))$ maps onto $\Gamma(\PP^{k-1},\cO(2))/I_2(\{P_1,\dots,P_m\})$
\item thus, $\Gamma(\PP^{k-1},\fI_P^2\cO(2))+I_2(\{P_1,\dots,P_m\})=\Gamma(\PP^{k-1},\cO(2))$
\item thus, $I_2(\{P_1,\dots,P_m\})$ maps onto $\Gamma(\PP^{k-1},\cO(2))/\Gamma(\PP^{k-1},\fI_P^2\cO(2))$
\end{itemize}
(actually these three statements are equivalent).

After possibly extension of scalars, we restrict the last assertion to a line $L$ passing through $P$
and having intersection with $Z_2(C)$ of length at least $2$.
It gives that $I_2(\{P_1,\dots,P_m\})|_L$ maps onto $\Gamma(L,\cO(2))/\Gamma(L,\fI_P^2\cO(2))$,
which has dimension~$2$.

However, since $I_2(\{P_1,\dots,P_m\})$ is a space of defining equations for $Z_2(C)$,
its restriction $I_2(\{P_1,\dots,P_m\})|_L$ lives in $\Gamma(L,\fI_{Z_2(C)\cap L}\cO(2))$.
But $Z_2(C)\cap L$ has length at least $2$, so $\Gamma(L,\fI_{Z_2(C)\cap L}\cO(2))$ has dimension at most $1$, a contradiction.
\end{proof}

\entry\label{courbe}
In \cite[Th.~2]{MMP14}, M\'arquez-Corbella, Mart\'inez-Moro, and Pellikaan
showed that if $C$ is the evaluation code of some $L(D)$ at $n$ distinct rational points on a smooth projective curve $X$ of genus~$g$,
with $2g+1<\deg(D)<n/2$, then $Z_2(C)=X$.
We slightly strengthen this result, and also generalize it to $\cB$-codes:

\begin{proposition*}
Let $X$ be a smooth projective curve of genus~$g$, let $D$ be a divisor on $X$, and $\cP\in X(\cB)$.
Choose a trivialization $\cO(D)|_{\cP}\simeq\cB$, and let
\beqv
C=\im(L(D)\overset{\res_{\cP}}{\longto}\cO(D)|_{\cP}\simeq\cB)
\eeqv
be the evaluation $\cB$-code of $L(D)$ at $\cP$.
Let $\im(\cP)$ be the schematic image of $\cP$ in $X$, which is a finite closed subscheme of $X$,
and let $[\im(\cP)]$ be its fundamental cycle \cite[\S1.5]{Fulton}, seen as a divisor on $X$. Assume
\beqv
\deg(D)>2g+1
\eeqv
and
\beqv
L(2D-[\im(\cP)])=0.
\eeqv
Then
\beqv
Z_2(C)=X.
\eeqv
\end{proposition*}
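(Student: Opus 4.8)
The strategy is to reduce the statement to the identity $I_2(C)=I_2(\iota)$, where $\iota\colon X\hookrightarrow\PP L(D)$ denotes the embedding of $X$ by the complete linear system $|D|$ and $I_2(\iota)$ (in the sense of~\ref{I2f}) is the degree-$2$ part of the homogeneous ideal of $\iota(X)$, and then to invoke the classical theorem that a smooth curve embedded by a line bundle of degree $\geq 2g+2$ is scheme-theoretically cut out by quadrics. Setting up: since $\deg D>2g+1$ and degrees are integers, $\deg D\geq 2g+2$; in particular $\cO(D)$ is very ample, $\iota$ is a closed immersion, $\Gamma(\PP L(D),\cO(1))=L(D)$, and $\iota^{*}\cO(1)=\cO(D)$. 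The first point to verify is that $\res_{\cP}$ is injective on $L(D)$, so that $C$ gets identified with $L(D)$ and $\PP C$ with $\PP L(D)$: its kernel is $L(D-[\im(\cP)])$, and since $h^{0}(2D-[\im(\cP)])=0$ by hypothesis, Riemann--Roch yields $\deg[\im(\cP)]\geq 2\deg D+1-g$, hence $\deg(D-[\im(\cP)])\leq g-1-\deg D<0$ and $L(D-[\im(\cP)])=0$. One also notes $C$ is non-degenerate, since $\cO(D)$ is globally generated.

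Next, unwinding~\ref{PC}, I would check that the projective system $\cP_{C}\colon\Spec\cB\to\PP C\simeq\PP L(D)$ coincides with $\iota\circ\cP$: both correspond to the surjection $L(D)\tens\cB\surj\cP^{*}\cO(D)\simeq\cB$, $f\tens b\mapsto\res_{\cP}(f)\,b$, which under $L(D)\simeq C$ is the multiplication $m_{\cB}\colon C\tens\cB\surj\cB$; in particular the schematic image of $\cP_{C}$ lies in $\iota(X)$. Then I compute $I_{2}(C)=\ker(S^{2}C\overset{m_{\cB}}{\longto}\cB)$: via $S^{2}C\simeq S^{2}L(D)=\Gamma(\PP L(D),\cO(2))$, this map factors as
\beqv
S^{2}L(D)\longto\Gamma(X,\cO(2D))=L(2D)\overset{\res_{\cP}}{\longto}\cB ,
\eeqv
where the first arrow is multiplication of sections, with kernel $I_{2}(\iota)$ by definition, and the second arrow $\res_{\cP}$ is injective on $L(2D)$ precisely because $L(2D-[\im(\cP)])=0$. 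Therefore $I_{2}(C)=I_{2}(\iota)$. (Projective normality of $\iota$, part of the $\deg D\geq 2g+1$ package, moreover gives $C\deux\simeq L(2D)$, but this is not needed here.)

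Finally, the single external ingredient: since $\deg D\geq 2g+2$, the homogeneous ideal of $\iota(X)$ in $\PP L(D)$ is generated in degree $2$ --- the classical result on equations of curves of high degree, due to Saint-Donat, Fujita, and Mumford --- so that $Z(I_{2}(\iota))=\iota(X)$ as closed subschemes. Combining everything, $Z_{2}(C)=Z(I_{2}(C))=Z(I_{2}(\iota))=X$, as claimed. I expect this last step --- importing quadric generation --- to be the only genuine obstacle; all the rest is the geometric dictionary of~\ref{PC}--\ref{I2f} together with one Riemann--Roch inequality. The proof also makes transparent the separate roles of the hypotheses: the degree bound $\deg D>2g+1$ forces $X$ to be cut out by quadrics, while $L(2D-[\im(\cP)])=0$ guarantees that $C$ detects exactly those quadrics and no others; this is what refines the condition $2g+1<\deg D<n/2$ of~\cite{MMP14}.
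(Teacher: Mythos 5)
Your proof is correct and takes essentially the same route as the paper: both identify $\cP_C=\iota\circ\cP$, establish $I_2(C)=I_2(X)$ using the vanishing $L(2D-[\im(\cP)])=0$, and invoke Saint-Donat's theorem that the embedded curve is cut out by quadrics. The only differences are organizational --- the paper proves the two inclusions $I_2(X)\subset I_2(C)$ and $I_2(C)\subset I_2(X)$ separately, whereas you obtain the equality in one stroke by factoring $S^2L(D)\to\cB$ through $L(2D)$, and you also spell out the implicit checks (injectivity of $\res_\cP$ on $L(D)$, non-degeneracy of $C$) that the paper leaves to the reader.
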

\begin{proof}
Set $k=\dim(C)=\dim L(D)$, and identify $X$ with its image under
the embedding $\iota:X\inj\PP^{k-1}$ defined by $L(D)$.
Observe that we then have $\cP_C=\iota\circ\cP$.

By \cite{SaintDonat72} and the hypothesis $\deg(D)>2g+1$, $\;X\subset\PP^{k-1}$ is defined by quadratic equations, and it contains $\cP_C$,
thus
\beqv
Z_2(C)\subset X.
\eeqv

Conversely, let $f\in I_2(C)\subset\Gamma(\PP^{k-1},\cO(2))$.
Then $f|_X$ lives in $L(2D)$ and vanishes at the image of $\cP$, so actually $f|_X\in L(2D-[\im(\cP)])$,
which is $0$ by hypothesis.
This forces $f|_X=0$, which means $f\in I_2(X)$. This proves $I_2(C)\subset I_2(X)$, hence
\beqv
Z_2(C)\supset X.
\eeqv
\end{proof}

If $\cB=\F^n$ and $\cP$ is a collection of \emph{distinct} rational points $P_1,\dots,P_n$,
then $[\im(\cP)]=P_1+\cdots+P_n$.
If moreover $\deg(D)<n/2$ then $L(2D-[\im(\cP)])=0$
for degree reasons, so Proposition~\ref{courbe} implies \cite[Th.~2]{MMP14}.
Is this slight improvement useful?

Let $q\geq49$ be a square. As $k\to\infty$, the best multiplication algorithms
for $\F_{q^k}$ over $\F_q$ in the literature \cite[Th.~6.4]{HR-ChCh+} use evaluation-interpolation on a curve $X$
at $n=2k+g-1$ points, for a divisor $D$ of degree $\deg(D)=k+g-1$
(actually this is also the case in older
results such as \cite[Th.~1.1]{Ballet99}, that require curves
with a larger number of points but only use $2k+g-1$ of them),
with $g\approx\frac{2k}{\sqrt{q}-2}$.
As a consequence, condition $\deg(D)>2g+1$ is satisfied, 
but condition $\deg(D)<n/2$ is \emph{not}.
So \cite[Th.~2]{MMP14} does not apply to the associated code $C$.
However, the very construction of these algorithms ensures $L(2D-(P_1+\cdots+P_n))=0$,
so the conclusion $Z_2(C)=X$ still holds thanks to Proposition~\ref{courbe}.

\end{document}